\newtheorem{theorem}{Theorem}
\newtheorem{corollary}{Corollary}[theorem]
\newtheorem{lemma}[theorem]{Lemma}
\newtheorem*{remark}{Remark}
\let\proglang=\textsf
\newcommand{\pkg}[1]{{\fontseries{b}\selectfont #1}}
\newcommand{\review}[1]{{\leavevmode\color{black}#1}}
\begin{document}

\title{Improving Forecasts for Heterogeneous Time Series by "Averaging", with Application to Food Demand Forecast}

\author{
  Lukas Neubauer \\
  TU Wien\\
  \texttt{lukas.neubauer@tuwien.ac.at} \\
   \And
 Peter Filzmoser \\
  TU Wien\\
  \texttt{peter.filzmoser@tuwien.ac.at} \\
}

\maketitle


\begin{abstract}%
    A common forecasting setting in real world applications considers a set of possibly heterogeneous time series of the same domain. Due to different properties of each time series such as length, obtaining forecasts for each individual time series in a straight-forward way is challenging. This paper proposes a general framework utilizing a similarity measure in \textit{Dynamic Time Warping} to find similar time series to build neighborhoods in a \textit{k-Nearest Neighbor} fashion, and improve forecasts of possibly simple models by averaging. Several ways of performing the averaging are suggested, and theoretical arguments underline the usefulness of averaging for forecasting. Additionally, diagnostics tools are proposed allowing a deep understanding of the procedure.
\end{abstract}

\keywords{Time Series, Forecasting, Combining Forecasts, Dynamic Time Warping, $k$-Nearest Neighbors}

\clearpage

\section{Introduction}

In many forecasting settings, one encounters a heterogeneous set of time series for which individual forecasts are required. A heterogeneous set of time series may imply time series of different seasonalities or trends. Hence, it may be difficult to model all of them in a joint way, or choose an approach yielding reasonable results for the entire set of time series. Modelling each time series on its own might be an alternative way to handle this challenge, however, this does not make use of the shared properties of the domain. Depending on this domain each time series might be very difficult to model and forecast. Additionally, a practitioner may even want an automatic procedure to produce forecasts.

To this end, some work has been done in comparing local and global forecasting approaches, whereas local means modelling each time series by its own. Global models refer to modelling the set of time series simultaneously. \citet{local_vs_global} argue that global and local models can achieve the same forecasts without needing additional assumptions. The global approach does, however, require a rather high number of total observations, additional tuning of hyperparameters, and might be very difficult to find in the first place. Especially the number of total observations can, in practice, be not sufficiently large. On the other hand, local models are especially hard to use for short time series, thus we want to fill the gap of forecasting a set of heterogeneous time series containing also short time series with a still rather low number of total observations.

\citet{local_vs_global_dgp} investigate further the notion of global models whereas they closely look at the relatedness of the time series. The authors focus on simulation experiments controlling the data-generating process of each time series to allow for arbitary relatedness in the set of time series as well. They then apply various machine learning methods to conclude that the performance and complexity of a global models heavily depends on the heterogeneity of the data as well as the actual amount of available data. A set of very heterogeneous time series requires possibly very complex global models which in turn also requires a lot of overall data. This leads us again to the challenge mentioned above.

The shortcomings of a totally global model are addressed by \citet{GODAHEWA2021107518}. The authors argue that with increasing amount of data, a global model may be not localised enough anymore, leading to worse forecasts. Thus, they propose a localisation technique where the time series are clustered and each cluster is then individually modelled by a global model. The clusters are feature- or distance-based, or even randomly assigned. While sounding similar to this paper's work, the approach differs to ours. We do not model each neighborhood of time series - we rather just use the neighbors' existing models. Similarly, \citet{BANDARA2020112896} discuss the use of neural networks on feature-based time series clusters but do require a lot of data. Another approach combining local and global modelling is taken by \citet{SMYL202075}. The author combines neural networks and statistical models such that the neural network is modelled across all time series, and local behavior is modelled by exponential smoothing models. This approach won the M4 forecast competition \citep{M4Comp}. 

In terms of short time series, there is little literature available to tackle this challenge. \citet{short_ts_techniques} compare simple and machine learning based models on a set of short time series ($14$ to $21$ observations each) regarding crimes in Mexico. The authors conclude that simple models like simple moving average or ARIMA perform better than more complex models such as neural networks.

Considering all mentioned aspects, our contributions are as follows. This paper proposes a meta framework for forecasting a set of possibly very heterogeneous time series by utilizing a range of local models and aggregating them in an appropriate way, exploiting similarities of those time series. The similarities can be seen as homogeneous parts of the time series. Hence we do require that there exists some homegeneity in the heteregenous set of time series.

In fact, every time series in a dataset is modelled by a local model. For each time series, we obtain a set of forecasts using the estimated models of its neighboring time series, and perform a type of model averaging to improve its forecasts. This differs from regular model averaging where one time series is modelled by a variety of models. Benefits of this approach are that we firstly allow for simple models in case of short time series, and still do not require a large total number of observations as is the case for complex global models. Therefore, we can say that our methodology is a mixture approach: using local models while still taking into account the whole set of time series in the forecast procedure. This procedure remains very flexible since we do not fix any family of possible models. Even the use of more complex models such as neural networks is still possible.

The rest of the paper is structered as follows. In Section \ref{sec:dtw} we introduce the measure of (dis-)similarity as it is used in our methodology, namely \textit{Dynamic Time Warping} (DTW). More details are available in \ref{sec: dtw2}. Section \ref{sec:knn} introduces the notion of $k$-nearest neighbors in the context of time series while in Section \ref{sec:mavg} we propose several averaging methods to improve forecasts. In \ref{sec: dtw_avg} we shortly outline how DTW can be used to obtain an average time series while a simple theoretical motivation is given in \ref{sec:th_mot} where we validate the notion of model averaging under certain assumptions. Section \ref{sec:eval} is about the evaluation techniques we apply followed by experiments in Section~\ref{sec:experiments} where $5$ different datasets are modelled and deepening diagnostics are performed for one of the datasets. Concluding remarks are given in Section \ref{sec:conc}.

\section{Methodology}
In this paper we propose following methodology. Given a finite set of time series $\mathcal T$ of possibly different properties, we first propose an appropriate dissimilarity measure on $\mathcal T$ by using \textit{asymmetric open-begin open-end Dynamic Time Warping}, see Section \ref{sec:dtw}. This dissimilarity measure is used to construct neighborhoods for each time series in $\mathcal T$. 

Next, for a fixed time series its neighbors are aggregated in a new way to obtain improved forecasts. This aggregation is proposed to be done in multiple ways based on $k$-nearest neighbors (Section \ref{sec:knn}). The actual model-averaging is described in Section \ref{sec:mavg}. Since it might not be clear how DTW corresponds to actual forecasts, we give a theoretical motivation in \ref{sec:th_mot} where the relationship between DTW and forecast distributions is studied for simple state-space models.

A general overview of the methodology given in the form of a flowchart is seen in Figure~\ref{fig:flowchart}.

\begin{figure}[!ht]
    \center
    \includegraphics[width=1.0\textwidth]{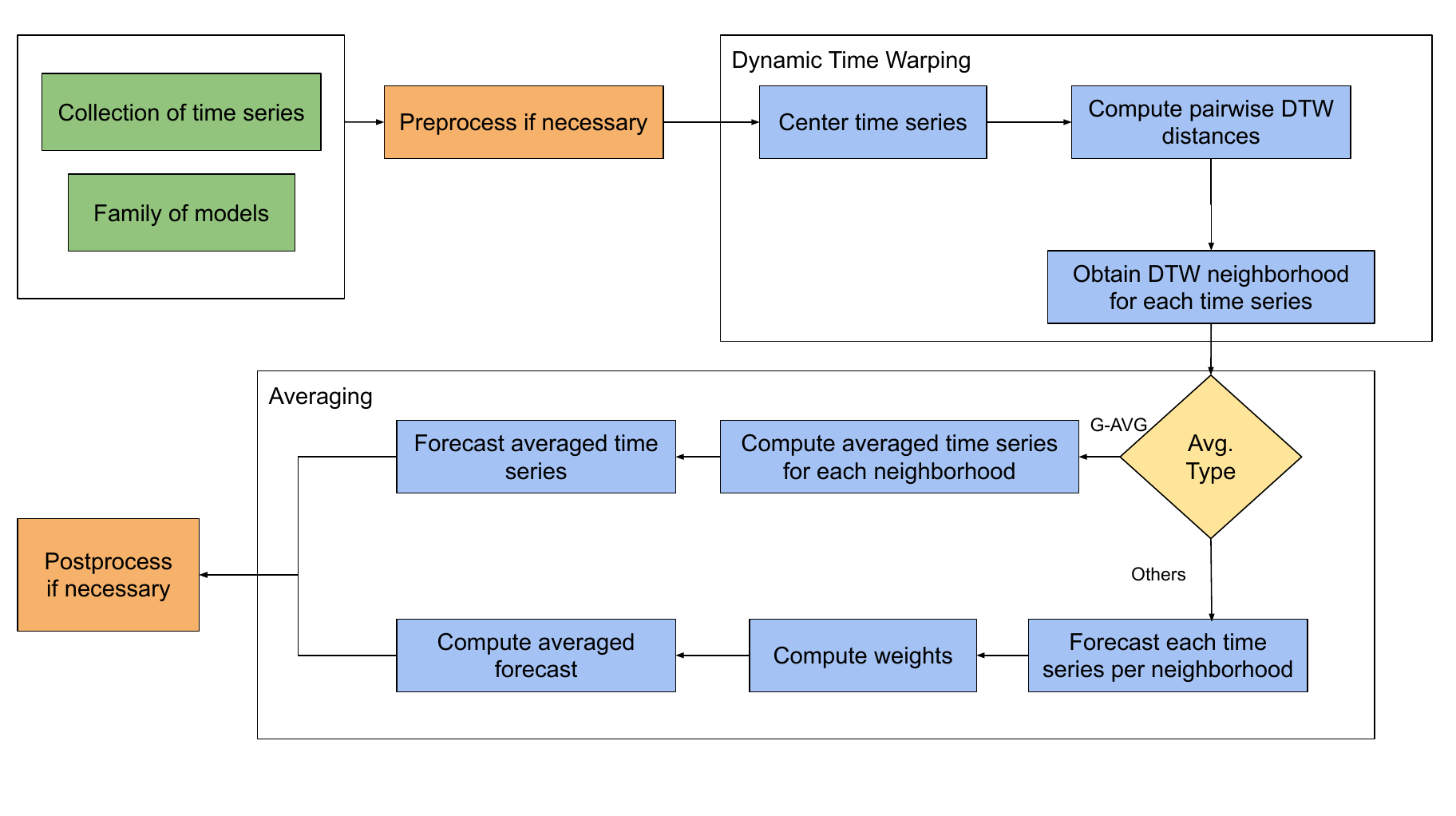}
    \caption{Overall workflow of the proposed averaging methodology.}
    \label{fig:flowchart}
\end{figure}

\subsection{Dynamic Time Warping}\label{sec:dtw}
A common technique for measuring similarity of two time series is \textit{Dynamic Time Warping} (DTW), introduced by \citet{dtw}. Originating from speech recognition, the goal is to align two sequences using a time-warping function in a cost-minimizing manner. In detail, we consider the (possibly multivariate) sequences $\mathbf X=(\mathbf X_1,\dots,\mathbf X_n)'\in\mathbb R^{n\times d}$, $\mathbf Y=(\mathbf Y_1,\dots,\mathbf Y_m)'\in\mathbb R^{m\times d}$ with possibly $n\neq m$ with appropriate dissimilarity function on the rows of $\mathbf X$ and $\mathbf Y$ as $d(i,j):=d(\mathbf X_i,\mathbf Y_j)$. 
Denote the entries of $\mathbf X_i$ and $\mathbf Y_j$ by
$X_{ik}$ and $Y_{jk}$, $k=1,\ldots ,d$, respectively.
A common choice, also used in this paper, is the Euclidean distance, i.e. $d(i,j)=\sqrt{\sum_{k=1}^d(X_{ik}-Y_{jk})^2}$. The DTW distance is now defined as
\begin{align}\label{eq:dtw}
    \text{DTW}(\mathbf X,\mathbf Y):=\min_{\phi\in\Phi} \sum_{k=1}^{K_\phi}w_\phi(k)d(\phi_{\mathbf X}(k),\phi_{\mathbf Y}(k)),
\end{align}
where $\phi=(\phi_\mathbf X,\phi_\mathbf Y):\{1,\ldots ,K_\phi\}\rightarrow\{1,\ldots ,n\}\times\{1,\ldots ,m\}$ denotes the warping function, and $w_\phi$ are corresponding weights. The warping function $\phi$ links the two sequences in a cost-minimizing way, e.g. $\phi(k)=(\phi_\mathbf X(k),\phi_\mathbf Y(k))$ implies that $\mathbf X_{\phi_\mathbf X(k)}$ and $\mathbf Y_{\phi_\mathbf Y(k)}$ are linked together. The length of the warping function $K_\phi$ depends on the optimal warping function and is therefore chosen alongside the minimization problem in \eqref{eq:dtw}.

A more in-depth introduction of DTW is given in \ref{sec: dtw2}, containing details about the warping functions as well as computing a representative time series using DTW.

\subsection{Nearest Neighbors}\label{sec:knn}

For many classification and regression problems, $k$-nearest neighbors ($k$-NN) is an easy yet well-performing method to apply and improve predictions. The basic idea is as follows. Consider a metric space $(X,\delta)$ and $x\in X$. Then a neighborhood around $x$  can be formed computing $\delta(x,z)$ for each $z\in X,z\neq x$. The $k$ elements with minimal distance to $z$ are then the neighbors of $x$. Eq~\eqref{eq:nhood} gives a formal definition for this.
\begin{align}\label{eq:nhood}
    \mathcal N(x):=\{z_1,&\dots,z_k: \\
    &\delta(x,z_1)\leq\dots\leq \delta(x,z_k)\leq \delta(x,z), z\neq z_i, i=1,\dots,k\}. \nonumber
\end{align}

In case of ties one could increase the neighborhood's size or choose one of the equally distanced neighbors randomly. For the neighborhood including $x$ itself we write
$\overline{\mathcal N}(x):=\mathcal N(x)\cup \{x\}$.

From a statistical point of view, we consider observations $(x_1,y_1),\dots,$ $(x_N,y_N)\in X\times Y$ where $Y$ denotes the set of possible labels and $X$ the feature space. We write $y(x)$ if $y$ is the label corresponding to $x$. In an unsupervised framework there would not be any labels and we would just be able to form neighborhoods. In a classification setting, $Y$ is a discrete set whereas the easiest example is a binary classifier with $Y=\{0,1\}$. A new observation $x_0$ may be classified as the majority class of its neighborhood. In a regression setting, the labels are continuous, e.g. $Y=\mathbb R$ and the predicted label is set to be $\hat y(x_0) = f(\bar{\mathcal N}(x_0))$  where $f$ is an aggregation function. In the simplest case we have $\hat y(x_0)=\frac{1}{k}\sum_{x\in\mathcal N(x_0)}y(x)$, where the predicted label is just the arithmetic mean of the neighborhood's labels.

The choice of $k$ is vital, however, it is still not completely clear how to choose it. There are many heuristics, such as choosing $k\approx\sqrt{N}$, which do not seem reasonable in many applications. In contrast, the optimal $k$ can be chosen based on an optimization criterion. In supervised settings, one usually splits the data in training and test set, and chooses $k$ to minimize a loss function on the training set. This $k$ is then fixed and used to predict on the test set. For more advanced approaches where $k$ is determined adaptively, see \citet{kstar}.

\subsubsection{Time Series Nearest Neighbors}

In the context of time series, $k$-NN has also been widely used. In time series classification, $1$-NN is often used in conjunction with DTW \citep{dtw_classification}. Many approaches also consider $k$-NN in a feature space. Such feature space could consist of time series features like length, trend, auto-correlation properties, and many more. For such cases, DTW is not even used. However, this approach assumes that the features can be extracted in a reasonable way which might not always be the case. In terms of $k$-NN for regression or forecasting, \citet{tsknn} use a simple approach where a single time series can be forecasted by the mean value of neighboring labels based on Euclidean distances of lagged values. Naturally, this method can also be applied to many time series in a pooled manner but this would not use the idea of employing similar time series for improving forecasts, where similarity is based on DTW.

\subsubsection{Our Setting}

We let $X$ be a set of heterogeneous and differently sized time series, equipped with the asymmetric open-begin open-end DTW distance measure. Note that this space is not metric, however, $k$-NN can still be applied in such setting. The label set $Y$ is not uniquely defined. For the \textit{first model, then average} models (see Section \ref{sec:mavg}), this set consists of one-step ahead forecasts for the time series of interest. In the case of \textit{first average, then model}, the label set and the aggregation function are more complicated and are described in the corresponding section.

\subsection{Model Averaging}\label{sec:mavg}

For the upcoming sections, let $\delta(x,y)=\text{DTW}_{\text{asym, OBE}}(x,y)$. Further, let $\hat{\cdot}:(x,M)\mapsto \hat x_{t+1|t,\dots,s}(M)$ with $x=(x_s,\dots,x_t)$ a time series and $M$ any model used to forecast $x_{t+1}$, i.e. $\hat x_{t+1}(M):=\hat x_{t+1|t,\dots,s}(M)$ is the one-step ahead forecast of $x$ obtained by using model $M$. We refer to $M$ as the baseline model which we aim to improve.

We understand a model $M$ as a mapping $M:\mathcal T\rightarrow\mathcal M$, taking a time series $x\in\mathcal T$ and outputting $M_x\in\mathcal M$ where $\mathcal M$ denotes the set of all possible models.

Given a model $M=M_x$ and a time series $z$ we denote $\tilde{\cdot}:(z,M_x)\mapsto \tilde M_x(z)$ as the refitting function, i.e. $\tilde M_x(z)$ is the model $M_x$, estimated on $x$ but fitted on $z$. More specifically, $M_x$ is first trained on $x$ yielding estimated parameters. Fixing those estimated parameters, we now plug in $z$ to obtain fits of the model on $z$. No data of $z$ is used to estimate parameters. Thus, forecasts of $\tilde M_x(z)$ are always with respect to $z$.

\subsubsection{First Average, Then Model Approach}\label{sec: fatm}

The first approach of improving a single forecast yielded by a baseline model is as follows. Given a time series $y=(y_s,\dots,y_t)$, we find a neighborhood $\mathcal N_y=\{z_1,\dots,z_{k_y}\}$ based on $k$-NN. Since the DTW distance also depends on the difference of level in the time series we center each time series first to avoid this effect. That way we keep other properties of the time series such as trends or seasonalities since these might still be helpful to improve forecasts.

Denote $z^c:=(z_{s_z}-\bar z,\dots,z_{t_z}-\bar z)$ the centered time series for $z = (z_{s_z},\dots,z_{t_z})$ with $\bar z=\sum_{i=s_z}^{t_z}z_i/(t_z-s_z+1)$. Then a neighborhood around $y$ is constructed such that $\delta(y^c ,z_1^c)\leq \delta(y_c,z_2^c)\leq\dots\leq \delta(y,z_{k_y}^c)$ and $z_i = (z_{s_i},\dots,z_{t_i})$ with $t\geq t_i,~t_i-s_i>t-s$. That means we look for neighboring time series which have more historical data. This is especially motivated by possibly having shorter time series which will be difficult to model and is often the case in many on-line forecasting settings where data of a subset of the time series are rare. In such case it is not reasonable to allow for even shorter time series as neighbors. Naturally, we also allow for neighbors that might not have recent data. A matching with such time series can still be reasonable due to seasonality effects. Also, not taking such time series into consideration might remove important information from the dataset.

Next, as mentioned in \ref{sec: dtw_avg}, we compute the neighborhood's averaging time series on centered sequences $z^c,z\in\overline{\mathcal N}_y$, given by $\text{avg}(y):=\text{aDBA}(\{z^c:z\in\overline{\mathcal N}_y\})$. 
As the name already suggests, we now take the averaging time series $\text{avg}(y)$ and map it to $M_{\text{avg}(y)}$.

After estimating model $M_{\text{avg}(y)}$ and its parameters we take this model and use it to forecast on $y$, i.e. we obtain $\hat y_{t+1}(\tilde M_{\text{avg}(y)}(y))$. Note that since $M_{\text{avg}(y)}$ is based on centered data, we allow the initial parameters of $M_{\text{avg}(y)}$ to be re-estimated using data of $y$. This procedure automatically handles the re-centering. This averaging will be later referred to as \textbf{G-AVG}.

In terms of $k$-NN we can write the aggregation function as
\begin{align}\label{eq:fatm}
    \hat y_{t+1}&=f(\bar{\mathcal N}(y))\nonumber\\
    &=\hat{\cdot}(y,\cdot)\circ\tilde{\cdot}(y,\cdot)\circ M\circ\text{aDBA}\circ\text{center}(\bar{\mathcal N}(y)),
\end{align}
where, with abuse of notation, we first center the time series of the neighboorhood, then average them, model the resulting averaged time series, and finish by refitting and performing the actual forecast with respect to $y$.

\subsubsection{First Model, Then Average Approach}\label{sec: fmta}

As noted previously, for a time series $y=(y_s,\dots,y_t)$ we obtain a neighborhood $\mathcal N_y$ with size $k_y$. 
However, instead of averaging the neighboring time series in terms of DTW, we consider a different approach. As mentioned, each time series $z\in\overline{\mathcal N}_y$ has been modelled, i.e. there exists $M_z$ for any $z\in\overline{\mathcal N}_y$.

In terms of $k$-NN, the averaging now is of the form
\begin{align}\label{eq:fmta}
    \hat y_{t+1}&=f(\bar{\mathcal N}(y))\nonumber\\
    &=g_w\circ\hat{\cdot}(y,\cdot)\circ\tilde{\cdot}(y,\cdot)\circ M(\bar{\mathcal N}(y)),
\end{align}
where $g_w$ is a weighted averaging function, defined as $g_w(x_1,\dots,x_n):=\sum_{i=1}^nw_ix_i$ where $n$ denotes the number of forecasts to be averaged. We have either $n=k_y+1$ or $n=k_y$.
Compared to the G-AVG approach, we clearly see based on Eq.~\eqref{eq:fatm} and Eq.~\eqref{eq:fmta} that the approaches basically differ in when averaging is performed.

Next, we propose various ways to choose the weights since it is not clear how to choose them in an optimal way.

\subsubsection{Simple Average}

The easiest and straightforward way to combine the forecasts is by taking the simple average. We obtain $w_z=w_y=\frac{1}{k_y+1}$. This assumes that all forecasts are equally important. We will later refer to this type of averaging as \textbf{S-AVG}. Alternatively, we can opt not to use $M_y$ but only utilize the models $\tilde M_z(y)$ of the neighbors, and set $w_z=1/k_y,w_y=0$ (\textbf{S-AVG-N}).

\subsubsection{Distance-weighted Average}

For the distance-based model average we consider two ways. First, we can take the DTW distances of the neighborhood and set 
\begin{align}\label{fc:weights:dist}
    w_z=\cfrac{\cfrac{1}{\delta(y^c,z^c)}}{\sum_{\tilde z\in\mathcal N_y}\cfrac{1}{\delta(y^c,\tilde z^c)}},\quad z\neq y.
\end{align}
implying that closer neighbors in terms of DTW are assigned higher weights in the model averaging. Since $\delta(y,y)=0$, we have to set $w_y=0$ in Eq.~\eqref{fc:weights:dist}, implying that this type of averaging only regards neighbors' models. We denote this type of averaging as \textbf{D-AVG-N}. This averaging does not take into account the forecasts $\hat y_{t+1}(M_y)$ because $w_y=0$. 

In contrast, we can consider the neighborhood's average $\text{avg}(y)$ and corresponding distances $\delta(z^c,\text{avg}(y))$ for $z\in\overline{\mathcal N}_y$. Due to the averaging algorithm (see \ref{sec: dtw_avg}) and a minimum neighborhood size of $1$, we have that $\delta(z^c, \text{avg}(y))>0$ for all $z$. That way we can set up weights as in Eq.~\eqref{fc:weights:dist2}.
\begin{align}\label{fc:weights:dist2}
    w_z&=\cfrac{\cfrac{1}{\delta(z^c,\text{avg}(y))}}{\sum_{z\in\overline{\mathcal N}_y}\cfrac{1}{\delta(z^c,\text{avg}(y))}},\quad z\in\overline{\mathcal N}_y.
\end{align}
We denote this type of averaging by \textbf{D-AVG}.

\subsubsection{Error-weighted Average}

The error-based or, equivalently, performance-based weights are computed from the models' historical performance. They might be calculated in two ways. First, for each time series $z\in\overline{\mathcal N}_y$ we obtain residuals $r_z$ based on the one-step ahead forecasts $\hat z(M_z)$, i.e. $r_{z,i}:=z_i-\hat z_i(M_z)$ for $i=s_z+1,\dots,t_z$. 

Next, we calculate an error measure on the one-step ahead errors, denoted by $E(z,\hat z(M_z))$. Corresponding weights for the model averaging are set to be
\begin{align}\label{fc:weights:pavg}
    w_z=\cfrac{\cfrac{1}{E(z,\hat z(M_z))}}{\sum_{\tilde z\in\bar{\mathcal N}_y}\cfrac{1}{E(\tilde z, \hat{\tilde z}(M_{\tilde z})) }},\quad z\in\overline{\mathcal N}_y.
\end{align}
This method will be referenced to as \textbf{P-AVG}.
In constrast to Eq.~\eqref{fc:weights:pavg}, we consider the refitted residuals $r_y(\tilde M_z(y))$ with $r(\tilde M_z(y))_i := y_i - \hat y_i(\tilde M_z(y))$ for $i=s+1,\dots,t$ and $z\in\overline{\mathcal N}_y$. Consequently, we obtain errors of $E(y, \hat y(\tilde M_z(y)))$, 
and corresponding weights as given in Eq.\eqref{fc:weights:pavgr}.
\begin{align}\label{fc:weights:pavgr}
    w_z=\cfrac{\cfrac{1}{E(y, \hat y(\tilde M_z(y)))}}{\sum_{\tilde z\in\bar{\mathcal N}_y}\cfrac{1}{E(y, \hat y(\tilde M_{\tilde z}(y)))}}\quad z\in\overline{\mathcal N}_y.
\end{align}
We denote this method by \textbf{P-AVG-R}. In practice, we utilize the root mean squared scaled error measure (RMSSE) as introduced in Section~\ref{sec:eval}.

\subsubsection{No-Model Approaches}

In respect to the non-parametric neighboring search of k-NN, we also consider non-parametric forecasts as follows. Following \citet{tsknn} where forecasts are computed based on the next available observed value, we apply a similar methodology. For the given time series $y=(y_s,\dots,y_t)$ and neighbors $z=(z_{s_z},\dots,z_{t_z})\in\mathcal N_y$, we obtain matchings (on which the neighborhood is based on) such that $\phi^{(z)}(k)=(\phi_y(k),\phi_z(k))$ for $k=1,\dots,K_z$. Since every index of $y$ must be matched, there exists a value $\tilde k$ and $\tilde t_z\in\{s_z,\dots,t_z\}$ such that $\phi^{(z)}(\tilde k)=\left(t,\tilde t_z\right)$. Next, there are two cases to differ.
\begin{itemize}
    \item Case $\tilde t_z < t_{z}$: Then there exists a successor $u_z=\tilde t_z+1$ with corresponding value of $z_{u_z}$ which is used to forecast $y_{t+1}$.
    \item Case $\tilde t_z = t_{z}$: No successor exists, thus the time series $z$ cannot be used to forecast $y_{t+1}$.
\end{itemize}
To this end, we obtain the set of possible successors given by
\begin{align}\label{fc:nm_succ}
    \mathcal S_y = \left\{z_{u_z}: u_z = \tilde t_z+1, \tilde t_z <t_z, z\in\mathcal N_y\right\}.
\end{align}

Clearly, Eq.~\eqref{fc:nm_succ} implies that $|\mathcal S_y|\leq |\mathcal N_y|$. This type of averaging also only considers neighbors' information because $y$ does not have any successor itself.

Similarly to before, we can forecast $y_{t+1}$ by $\hat y_{t+1}=\sum_{s\in\mathcal S_y}w_s s$ with approriate weights. We choose simple weights as in $w_s=1/|\mathcal S_y|$ (\textbf{S-NM-AVG}) and distance-based weights as in Eq. \eqref{fc:weights:dist} (\textbf{D-NM-AVG}). Note that the matchings are based on centered time series, hence the forecasts obtained here are actually forecasts for $y^c$. Since we do not have a model taking care of re-centering, we have to do it manually by setting $\hat y_{t+1}=\hat y_{t+1}^c+\bar y$.

    The methodology's notation is briefly summarised in Table~\ref{tab:abbr}.
    \begin{table}[!ht]

\caption{\label{tab:abbr} Averaging notation.}
\centering
\begin{tabular}[t]{ll}
\toprule
Avg. Notation & Description \\
\midrule
G-AVG & Forecast the averaged time series\\
S-AVG & Model Avg. using simple weights\\
S-AVG-N & Model Avg. using simple weights and only neighbors\\
D-AVG & Model Avg. using distance-based weights\\
D-AVG-N & Model Avg. using distance-based weights and only neighbors\\ 
P-AVG & Model Avg. using performance-based weights\\
P-AVG-R & Model Avg. using refitted performance-based weights\\
S-NM-AVG & No Model Avg. using simple weights\\
D-NM-AVG & No Model Avg. using distance-based weights\\
\bottomrule
\end{tabular}
\end{table}

\section{Evaluation Methods}\label{sec:eval}

For the evaluation of our methods we use scaled one-step ahead forecast errors as proposed by \citet{scaledrmse}, adapted to our problem setting of on-line forecasting. The authors denote the mean absolute scaled error (MASE) by
\begin{align}\label{eq:mase}
    \text{MASE}(y,\hat y) &= \cfrac{1}{t-s}\sum_{u=s+1}^{t} |q_u|,\\
    q_u &= \cfrac{y_u-\hat y_u}{\frac{1}{t-s}\sum_{v=s+1}^{t}|y_v-y_{v-1}|}, \nonumber
\end{align} 
where $y=(y_s,\dots,y_t)$ is a time series with one-step ahead forecasts $\hat y=(\hat y_{s+1|s},\dots,\hat y_{t|t-1})$. This means we scale each error by the average in-sample error when using the last available observation as the one-step ahead forecast, also known as the random walk forecast.

As mentioned by \citet{scaledrmse}, one advantage of this measure is its independence of scale, making it easier to compare results of different time series. Since the measure compares the actual forecast with the mean forecast error based on a random walk forecast, we can say that if $\text{MASE}<1$, then the forecast method used to obtain $\hat y$ works better on average than the naive approach of using the last available value. Similarly, if $\text{MASE}>1$, then the method performs worse then the random walk forecast.

However, in an on-line forecasting setting, scaling by the whole in-sample error may not be reasonable, hence we use a different scaling. We adapt Eq.~\eqref{eq:mase} to a root mean squared scaled error (RMSSE) given by
 
\begin{align}\label{eq:rmsse2}
    \text{RMSSE}_{s'}^{t'}(y,\hat y,\hat y^b)&:=
    \sqrt{\cfrac{1}{t'-s'+1}\sum_{u=s'}^{t'}q_u^2},\\
    q_u &= \cfrac{y_u-\hat y_u}{\sqrt{\frac{1}{u-s}\sum_{v=s+1}^{u}(y_v-\hat y^b_v)^2}},\nonumber
\end{align}

where $s\leq s'\leq t'\leq t$. This means we obtain scaled errors $q_u$ by scaling the error $y_u-\hat y_u$ by the averaged benchmark forecast error until time $u$. Altogether, $\text{RMSSE}_{s'}^{t'}$ in Eq.~\eqref{eq:rmsse2} gives the average error of the window $\{s',\dots,t'\}$ scaled by the average historical benchmark forecast until time $t'$. A typical benchmark method is the random walk forecast given by $\hat y_v^b = y_{v-1}$.

As the data is usually split into training and test set, we also need to differ the corresponding evaluation techniques. In the training set we compute the RMSSE as given above, yielding a final RMSSE value for the entire training set.


For an evaluation on the test set
we follow the notation of \citet{scaledrmse} and its use in the \proglang{R} package \pkg{forecast} \citep{forecast-pkg} where the test errors are scaled by the training error of the random walk. Consider a time series $y=(y_s,\dots,y_t,y_{t+1},\dots,y_T)$ split in training and test set at time step $t$. Then each test set residual is scaled by the root mean squared error of the random walk forecast yielding scaled test set residuals given by Eq.~\eqref{eq:scaled_errors_test}.
\begin{align}\label{eq:scaled_errors_test}
    q_u = \cfrac{r_u}{\sqrt{\frac{1}{t-s}\sum_{v=s+1}^{t}(y_v-\hat y^b_v)^2}},\quad T\geq u > t.
\end{align}

Additionally to RMSSE, we also consider more common forecasting measures such as mean absolute error (MAE) \eqref{eq:mae}, root mean squared error (RMSE) \eqref{eq:rmse} and symmetric mean absolute percentage error (sMAPE) \eqref{eq:smape}, see \cite{scaledrmse} for a recent review. 
\begin{align}
    \text{MAE} &= \cfrac{\sum_{i=t+1}^{T}|y_i-\hat y_i|}{T-t} \label{eq:mae}\\
    \text{RMSE} &= \sqrt{\cfrac{\sum_{i=t+1}^{T}(y_i-\hat y_i)^2}{T-t}} \label{eq:rmse}\\
    \text{sMAPE} &= \cfrac{2}{T-t} \sum_{i=t+1}^{T}\cfrac{|y_i-\hat y_i|}{|y_i|+|\hat y_i|} \label{eq:smape}
\end{align}

For each dataset the mean and median error for each measure is computed.

\section{Experiments}\label{sec:experiments}
\review{
To demonstrate the methodology, we use $5$ publicly available datasets. Table \ref{tab:datasets} provides summary information of these datasets. 
}

\begin{table}[!ht]

\caption{\label{tab:datasets} Summary of datasets.}
\centering
\begin{tabular}[t]{l|llll}
\toprule
Dataset Name & No. of Time Series & Forecast Horizon & Mean Length & Length Range\\
\midrule
Food Demand & 43 & 1-4 & 33 & 2 - 78 \\
M3 & 1,376 & 18 & 119 & 66 - 144 \\
Tourism & 366 & 24 & 299 & 91 - 333 \\
CIF 2016 & 72 & 6, 12 & 99 & 28 - 120 \\
Hospital & 767 & 12 & 84 & 84 - 84 \\
\bottomrule
\end{tabular}
\end{table}

All datasets are on monthly basis except for the Food Demand data which is a weekly dataset. A short summary is as follows.
\begin{itemize}
    \item \review{Food Demand. Weekly data from smart fridges with the goal to forecast the demand of each fridge for the upcoming week in one-step ahead fashion. The dataset is available in this paper's \proglang{R} package.}
    \item M3 Dataset \citep{M3Data}. 
    Monthly data of the M3 competition, split in $5$ subcategories, namely \textit{micro}, \textit{macro}, \textit{industry}, \textit{demographic} and \textit{finance}. The $6$-th subcategory \textit{other} was removed from the analysis. We use this prior domain knowledge and model each subcategory individually. The data was obtained from the \proglang{R} package \pkg{Mcomp} \citep{M3DataR}.
    \item Tourism \citep{TourismData}. 
    Monthly dataset from the tourism forecast competition. The data is available from the \proglang{R} package \pkg{Tcomp} \citep{TourismDataR}.
    \item CIF 2016 \citep{CIF2016Data}. Monthly data from the CIF 2016 forecasting competition. This data contains $24$ real time series from the banking domain and $48$ artifically created time series. The dataset is available from Zenodo \citep{CIF2016Zenodo} and is put in this paper's corresponding \proglang{R} package for convenience.
    \item Hospital. Monthly time series counting patients for different medical products and related medical problems. The dataset is available in the \proglang{R} package \pkg{expsmooth} \citep{expsmooth-pkg}.
\end{itemize}
\review{All datasets in Table~\ref{tab:datasets} have a forecast horizon greater than $1$, whereby the Food Demand data is a special case since we are only interested in one-step ahead forecasts for that dataset.} Hence we run two experiments on these datasets: multi-step ahead forecasts as required for the datasets, and cumulative one-step ahead forecasts to compare to the Food Demand data results. We do not compute multi-step forecasts for the Food Demand data.

We denote by TSAVG the trained averaging method. For all datasets the same training procedure was performed.

\subsection{Baseline Model}\label{sec:ets}

For the base family of models, any common time series model family can be used. We here use the {ETS} family of models \citep{expsmoothing_hyndman,expsmoothing_taylor}. These models consist of $3$ components: \textbf{E}rror, \textbf{T}rend and \textbf{S}easonality. 

The most simple ETS model is of the form $ANN$ and is also known as \textit{Exponential Smoothing}. It does not have any trend or seasonality component, and is given by the recursion
\begin{align}\label{eq:ann1}
    l_t &= \alpha y_t + (1-\alpha)l_{t-1},\\
    \hat y_{t+h|t} &= l_t,\quad h>0,\nonumber
\end{align}
where $l$ denotes the level component of the model which is also equal to the flat forecast $\hat y_{t+1|t}$. The model parameter $\alpha$ is usually found by minimizing the sum of squared one-step ahead forecast errors.

The next more complex model is of the form $AAN$ and is called \textit{Holt-Winters} model. It is given by
\begin{align}\label{eq:aan}
    l_t &= \alpha y_t + (1-\alpha)(l_{t-1}+b_{t-1}),\\
    b_t &= \beta (l_t-l_{t-1}) + (1-\beta)b_{t-1},\nonumber\\
    \hat y_{t+h|t} &= l_t+hb_t,\quad h>0, \nonumber
\end{align}
where $l$ denotes the level component again. The newly introduced trend component is denoted by $b$. In this model the forecasts are not flat anymore, but linear in the forecast horizon $h$. All smoothing parameters are usually constrained between $0$ and $1$. For a more detailed review on those models and many more equations like Eq.~\eqref{eq:ann1} and Eq.~\eqref{eq:aan}, see \citet{exp_smoothing_review}.

With that, this family provides a very flexible way of modelling and forecasting, and is therefore often used in business forecasting applications. The \pkg{forecast} package \citep{forecast-pkg} in \proglang{R} also provides an automatic forecasting framework for {ETS} and other models (such as ARIMA), where the most appropriate model of all possible models in that family is chosen automatically based on some optimization criteria. We use the \textit{corrected Akaike's Information Criterion} given by
\begin{align}\label{eq:aicc}
    \text{AIC}_c&=\text{AIC}+\cfrac{2k(k+1)}{T-k-1}
    =-2\log(L)+2k+\cfrac{2k(k+1)}{T-k-1},
\end{align}
where $L$ is the model's likelihood, $k$ denotes the total number of parameters in the model, and $T$ is the sample size. The correction is needed because in small samples the regular AIC tends to overfit and select models with too many parameters \citep{aicc}.

Each time series is modelled by ETS and then forecasted in a one-step ahead fashion, irregarding its length. 

Additionally, another common model family of time series models in ARIMA \citep{Shumway2000} is used to model the dataset, again utilizing the \pkg{forecast} package. As for the ETS family, the best model is chosen by the corrected AIC as defined in Eq.~\eqref{eq:aicc}.

\subsection{Global Benchmark Model}

As a second benchmark model we use a pooled auto-regressive (AR) model \citep{BALTAGI2021}, which can be seen as a global model since it uses all individuals' information. This linear $AR(p)$ model is given by 
\begin{align}\label{eq:ar_p}
    y_{i,t}=\alpha+\sum_{k=1}^p\beta_k y_{i,t-k}+\epsilon_{i,t},
\end{align}
where $i=1,\dots,N$ denotes the $i$-th individual and $t=s_i,\dots,t_i$ denotes the time component of the panel. The order of the model is given by $p\geq 1$.
This means that for the entire panel data we only need to estimate $p+1$ parameters in Eq.~\eqref{eq:ar_p} - the global intercept $\alpha$ as well as the global slope parameters $\beta_1,\dots,\beta_p$. In practice, the observations of all individuals are stacked on top of each other to obtain a simple linear model which can be estimated by ordinary least squares.

Such models, also called panel models, can be of different forms as well. We also modelled the data using variable intercepts, i.e. each individual has an individual intercept. However, this model turned out to be worse than the pooled model for the datasets.

Another variant is the variable-coefficient model with individual intercept and slope. However, such model is usually estimated by estimating each individual's model by its own. Thus, such approach cannot really be seen as a global model, and thus we opted to not use this model as well.

For more details about the models and their statistical properties, see the book of \citet{BALTAGI2021}. All global panel models have been fitted using the \pkg{plm} package \citep{plm_package} in \proglang{R}. \review{Lag selection for $p=1,\dots,L_\text{max}$ was performed using time series cross-validation  as in Section~\ref{sec:tscv}. For each dataset an individual maximum lag was selected based on the rule of thumb of \citep{BANDARA2020112896}, namely $\text{input size}=1.25\max(\text{output size},\text{seasonal period})$. The specific values of $L_\text{max}$ are given in Table~\ref{tab:train_params} while $5$ equidistant values for $p$ are chosen for each dataset.}

\subsection{Choice of Optimal Parameter}\label{sec:tscv}

In every experiment we perform time series cross-validation (TSCV) based on a rolling window approach to choose the optimal hyperparameter $k$ in $k$-NN. We cannot apply regular cross-validation since the assumption of independent data is not valid in a time series context.

The initial window contains observations from time step $1$ until time step $T_0$. In each iteration, the window is expanded by one time step, yielding a new fold. In each fold we obtain an RMSSE value for each individual and hyperparameter. In detail, assume there exists a hyperparameter $k\in\Theta$, where $\Theta$ is the search grid. Let a fold be $f=\{1,\dots,T_0,\dots,t_f\}, T_0\leq t_f\leq T_{\text{train}}$ and consider an individual time series $y^{(i)}=(y_{s_i},\dots,y_{t_i})$. Then we compute $\text{RMSSE}_{\max(T_0,s_i)}^{t_f}(y^{(i)}, \hat y^{(i)}(m); k)$ with random walk benchmark forecasts for each parameter $k$, time step $t_f$, individual $i$ and method $m$. Furthermore, the cross-validation yields standard errors as well by aggregating over the folds, i.e.

\begin{align}\label{eq:se}
    &\text{SE}(y^{(i)},\hat y^{(i)};k) = \sqrt{\cfrac{1}{T_\text{train}-\max(T_0,s_i)+1}}\\
    &\sqrt{\cfrac{\sum_{t_f=\max(T_0,s_i)}^{T_\text{train}}\left(\text{RMSSE}_{\max(T_0,s_i)}^{t_f} (y^{(i)},\hat y^{(i)}(m);k)-\mu(y^{(i)},\hat y^{(i)}(m);\theta)\right)^2}{T_\text{train}-\max(T_0,s_i)}},\nonumber
\end{align}
with the mean cross-validation error, also known as \textit{CV score}, of 
\begin{align}\label{eq:cv_score}
    \mu(y^{(i)},\hat y^{(i)}(m);k)=\cfrac{\sum_{t_f=\max(T_0,s_i)}^{T_{\text{train}}}\text{RMSSE}_{\max(T_0,s_i)}^{t_f} (y^{(i)},\hat y^{(i)}(m);k)}{T_\text{train}-\max(T_0,s_i)}.
\end{align}

Note here that each individual $i$ might be available in a different amount of folds since we perform cross-validation based on the time aspect of the panel data. The optimal hyperparameter is now chosen using the one-standard-error rule, i.e. we choose the most parsimonious model lying in the one standard error band of the global minimum of cross-validation errors values. This means that based on Eq.~\eqref{eq:se} and Eq.~\eqref{eq:cv_score} we obtain the optimal parameter $k^\ast$ for $y^{(i)}$ as in the minimization problem \eqref{eq:k_min} where $k_i^{(m)}$ is the parameter realizing the minimal mean cross-validation error, i.e. $k_i^{(m)}=\text{argmin}_k \mu(y^{(i)},\hat y^{(i)}(m);k)$.

\begin{align}\label{eq:k_min}
    k^\ast_i = \min \left\{\theta:|k-k_i^{(m)}|\leq \text{SE}(y^{(i)},\hat y^{(i)};k_i^{(m)})\right\}.
\end{align}

\review{Training parameters such as $T_0$ are all listed in Table~\ref{tab:train_params} for each dataset. The reason why a mean value is given for $T_\text{train}$ is because of the possibly different lengths of the time series.}

\begin{table}[!ht]

\caption{\label{tab:train_params} Summary of training parameters. $T_0$ denotes the initial training timestep, $T_\text{train}$ the final training timestep and $L_\text{max}$ the maximum lag in the pooled model.}
\centering
\review{
\begin{tabular}[t]{l|rrr}
\toprule
Dataset Name & $T_0$ &  mean $T_\text{train}$ & $L_\text{max}$ \\
\midrule
Food Demand & 21 & 80 & 5 \\
M3 - Demographic &  92 & 308 & 22 \\
M3 - Finance &  94 & 411 & 22 \\
M3 - Industry &  63 & 172 & 22 \\
M3 - Macro &  73 & 378 & 22 \\
M3 - Micro &  87 & 111 & 22 \\
Tourism & 131 & 316 & 30 \\
CIF 2016 & 96 & 109 & 15  \\
Hospital & 58 & 72 & 15 \\
\bottomrule
\end{tabular}
}
\end{table}

\subsection{Results}
Tables~\ref{tab:multistep_mae}, \ref{tab:multistep_rmse}, \ref{tab:multistep_rmsse}, \ref{tab:multistep_smape} show the corresponding mean and median errors for the multi-step horizon case for each dataset. 

\review{
First, we note that the pooled AR model seems to perform well for the M3 and Tourism dataset. Those datasets are characterized by long forecast horizon as well as long time series which could be a reason for the good performance. When comparing the averaging methodology to the local benchmarks ETS and ARIMA, there is no clear picture present. Nevertheless, TSAVG does seem to keep up with the benchmark methods and is able to improve performance of the ETS model which it is supposed to do. For testing statistical significance, non-parametric Friedman tests are performed on the mean values and the corresponding p-values are given in the tables' captions. No statistical significant difference for an $\alpha$-level of $5\%$ is found for any error measure.

The corresponding one-step ahead errors are as seen in Tables~\ref{tab:onestep_mae}, \ref{tab:onestep_rmse}, \ref{tab:onestep_rmsse}, \ref{tab:onestep_smape}. We observe a similar picture that the pooled model performs well for the M3 and Tourism dataset. When comparing TSAVG to ETS and ARIMA, the numbers are very close and no clear best method is present. In the case of one-step ahead errors, TSAVG is only able to improve performance of the ETS model in a few cases. In case of the Food Demand dataset containing just a small number of rather short time series we do observe improvements over the local base models as well as all other competing models. As before, Friedman tests are performed on the mean values. The null hypothesis of no differences can not be rejected for any error measure again. 

To summarize the results, the proposed method is competitive on all datasets. Especially on the Food Demand dataset we observe good performance due to its characteristics. However, TSAVG also performs comparably on datasets which do not share the same properties as the Food Demand dataset. For example, the Hospital dataset only contains equally sized time series whereas the Tourism dataset contains very long time series. This makes TSAVG a very flexible method to improve simple local models for a wide range of datasets.
}

When looking at all results with respect to the datasets' sizes, there does not seem to exist a huge impact. We do, however, observe that the bigger the dataset the better TSAVG seems to perform compared to the baseline ETS model. This is reasonable since the number of possible neighbors increases with increasing dataset size.
\begin{table}

    \caption{\label{tab:multistep_mae}Multi-step MAE (p-value 0.127). The best results are in bold.}
    \centering
    \resizebox{\linewidth}{!}{
    \review{
    \begin{tabular}[t]{lrrrrrrrr}
    \toprule
    \multicolumn{1}{c}{ } & \multicolumn{2}{c}{M3} & \multicolumn{2}{c}{CIF2016} & \multicolumn{2}{c}{HOSPITAL} & \multicolumn{2}{c}{TOURISM} \\
    \cmidrule(l{3pt}r{3pt}){2-3} \cmidrule(l{3pt}r{3pt}){4-5} \cmidrule(l{3pt}r{3pt}){6-7} \cmidrule(l{3pt}r{3pt}){8-9}
    Model & Mean & Median & Mean & Median & Mean & Median & Mean & Median\\
    \midrule
    TSAVG & 713.580 & 510.021 & \textbf{545365.6} & 98.162 & 21.440 & \textbf{6.763} & 7109.038 & 985.100\\
    ETS & 736.036 & 507.657 & 642448.9 & \textbf{90.546} & 22.553 & 6.782 & 7391.902 & 959.128\\
    ARIMA & 704.468 & 487.471 & 568230.7 & 105.003 & 20.665 & 6.869 & 4184.401 & 811.594\\
    POOL & \textbf{678.815} & \textbf{485.306} & 1300457.8 & 219661.126 & \textbf{20.196} & 7.248 & \textbf{2258.485} & \textbf{501.032}\\
    \bottomrule
    \end{tabular}
    }}
    \end{table}
    \begin{table}
    
    \caption{\label{tab:multistep_rmse}Multi-step RMSE (p-value 0.127). The best results are in bold.}
    \centering
    \resizebox{\linewidth}{!}{
        \review{
    \begin{tabular}[t]{lrrrrrrrr}
    \toprule
    \multicolumn{1}{c}{ } & \multicolumn{2}{c}{M3} & \multicolumn{2}{c}{CIF2016} & \multicolumn{2}{c}{HOSPITAL} & \multicolumn{2}{c}{TOURISM} \\
    \cmidrule(l{3pt}r{3pt}){2-3} \cmidrule(l{3pt}r{3pt}){4-5} \cmidrule(l{3pt}r{3pt}){6-7} \cmidrule(l{3pt}r{3pt}){8-9}
    Model & Mean & Median & Mean & Median & Mean & Median & Mean & Median\\
    \midrule
    TSAVG & 860.793 & 604.199 & \textbf{632468.5} & 117.239 & 26.225 & 8.347 & 8632.070 & 1241.053\\
    ETS & 886.086 & 626.804 & 722426.3 & \textbf{107.687} & 27.336 & 8.324 & 8974.416 & 1222.899\\
    ARIMA & 845.835 & \textbf{588.559} & 641597.5 & 126.505 & 25.021 & \textbf{8.241} & 5377.097 & 1090.481\\
    POOL & \textbf{815.725} & 592.032 & 1456160.9 & 249604.091 & \textbf{24.309} & 8.709 & \textbf{2743.846} & \textbf{601.960}\\
    \bottomrule
    \end{tabular}}}
    \end{table}
    \begin{table}
    
    \caption{\label{tab:multistep_rmsse}Multi-step RMSSE (p-value 0.44). The best results are in bold.}
    \centering
    \resizebox{\linewidth}{!}{
        \review{
    \begin{tabular}[t]{lrrrrrrrr}
    \toprule
    \multicolumn{1}{c}{ } & \multicolumn{2}{c}{M3} & \multicolumn{2}{c}{CIF2016} & \multicolumn{2}{c}{HOSPITAL} & \multicolumn{2}{c}{TOURISM} \\
    \cmidrule(l{3pt}r{3pt}){2-3} \cmidrule(l{3pt}r{3pt}){4-5} \cmidrule(l{3pt}r{3pt}){6-7} \cmidrule(l{3pt}r{3pt}){8-9}
    Model & Mean & Median & Mean & Median & Mean & Median & Mean & Median\\
    \midrule
    TSAVG & \textbf{1.915} & 1.178 & 1.121 & 0.864 & 0.878 & \textbf{0.822} & 1.761 & 1.468\\
    ETS & 1.967 & 1.220 & \textbf{1.105} & \textbf{0.809} & 0.900 & 0.828 & 1.853 & 1.497\\
    ARIMA & 1.927 & \textbf{1.108} & 1.191 & 0.896 & \textbf{0.871} & 0.822 & 1.597 & 1.414\\
    POOL & 2.293 & 1.109 & 105106.814 & 2050.317 & 0.907 & 0.856 & \textbf{1.327} & \textbf{0.934}\\
    \bottomrule
    \end{tabular}}}
    \end{table}
    \begin{table}
    
    \caption{\label{tab:multistep_smape}Multi-step sMAPE (p-value 0.825). The best results are in bold.}
    \centering
    \resizebox{\linewidth}{!}{
        \review{
    \begin{tabular}[t]{lrrrrrrrr}
    \toprule
    \multicolumn{1}{c}{ } & \multicolumn{2}{c}{M3} & \multicolumn{2}{c}{CIF2016} & \multicolumn{2}{c}{HOSPITAL} & \multicolumn{2}{c}{TOURISM} \\
    \cmidrule(l{3pt}r{3pt}){2-3} \cmidrule(l{3pt}r{3pt}){4-5} \cmidrule(l{3pt}r{3pt}){6-7} \cmidrule(l{3pt}r{3pt}){8-9}
    Model & Mean & Median & Mean & Median & Mean & Median & Mean & Median\\
    \midrule
    TSAVG & 0.159 & 0.104 & 0.158 & \textbf{0.079} & 0.181 & 0.168 & 0.352 & 0.295\\
    ETS & 0.164 & 0.106 & \textbf{0.144} & 0.082 & 0.185 & 0.166 & 0.369 & 0.306\\
    ARIMA & 0.161 & \textbf{0.101} & 0.169 & 0.091 & \textbf{0.179} & \textbf{0.166} & 0.325 & 0.283\\
    POOL & \textbf{0.153} & 0.104 & 1.589 & 1.883 & 0.190 & 0.173 & \textbf{0.311} & \textbf{0.187}\\
    \bottomrule
    \end{tabular}}}
    \end{table}
\begin{table}

    \caption{\label{tab:onestep_mae}Cumulative one-step ahead MAE (p-value 0.564). The best results are in bold.}
    \centering
    \resizebox{\linewidth}{!}{
        \review{
    \begin{tabular}[t]{lrrrrrrrrrr}
    \toprule
    \multicolumn{1}{c}{ } & \multicolumn{2}{c}{M3} & \multicolumn{2}{c}{CIF2016} & \multicolumn{2}{c}{HOSPITAL} & \multicolumn{2}{c}{TOURISM} & \multicolumn{2}{c}{FOOD} \\
    \cmidrule(l{3pt}r{3pt}){2-3} \cmidrule(l{3pt}r{3pt}){4-5} \cmidrule(l{3pt}r{3pt}){6-7} \cmidrule(l{3pt}r{3pt}){8-9} \cmidrule(l{3pt}r{3pt}){10-11}
    Model & Mean & Median & Mean & Median & Mean & Median & Mean & Median & Mean & Median\\
    \midrule
    TSAVG & 484.660 & 322.573 & 301584.2 & 82.064 & 19.610 & \textbf{6.437} & 3647.229 & 770.753 & \textbf{8.523} & \textbf{6.917}\\
    ETS & 478.813 & 318.437 & 247264.0 & \textbf{76.378} & 19.635 & 6.491 & 3744.562 & 767.282 & 8.662 & 7.824\\
    ARIMA & \textbf{470.654} & \textbf{311.126} & \textbf{233306.1} & 79.007 & 19.040 & 6.495 & 3190.488 & 700.724 & 9.882 & 9.720\\
    POOL & 475.890 & 313.683 & 411428.1 & 32944.310 & \textbf{17.643} & 6.854 & \textbf{1543.199} & \textbf{461.658} & 9.438 & 7.387\\
    \bottomrule
    \end{tabular}}}
    \end{table}
    \begin{table}
    
    \caption{\label{tab:onestep_rmse}Cumulative one-step ahead RMSE (p-value 0.564). The best results are in bold.}
    \centering
    \resizebox{\linewidth}{!}{
        \review{
    \begin{tabular}[t]{lrrrrrrrrrr}
    \toprule
    \multicolumn{1}{c}{ } & \multicolumn{2}{c}{M3} & \multicolumn{2}{c}{CIF2016} & \multicolumn{2}{c}{HOSPITAL} & \multicolumn{2}{c}{TOURISM} & \multicolumn{2}{c}{FOOD} \\
    \cmidrule(l{3pt}r{3pt}){2-3} \cmidrule(l{3pt}r{3pt}){4-5} \cmidrule(l{3pt}r{3pt}){6-7} \cmidrule(l{3pt}r{3pt}){8-9} \cmidrule(l{3pt}r{3pt}){10-11}
    Model & Mean & Median & Mean & Median & Mean & Median & Mean & Median & Mean & Median\\
    \midrule
    TSAVG & 615.987 & 406.703 & 360462.5 & \textbf{93.481} & 23.985 & 8.080 & 4759.500 & 1005.795 & \textbf{10.112} & \textbf{8.417}\\
    ETS & 611.354 & 410.237 & 305246.7 & 93.493 & 23.991 & \textbf{7.992} & 4870.787 & 1008.100 & 10.193 & 8.720\\
    ARIMA & 598.826 & 395.709 & \textbf{301763.1} & 96.834 & 23.320 & 8.022 & 4119.036 & 929.144 & 11.552 & 11.365\\
    POOL & \textbf{597.438} & \textbf{389.837} & 451571.8 & 32944.598 & \textbf{21.610} & 8.357 & \textbf{2047.617} & \textbf{597.842} & 11.526 & 9.685\\
    \bottomrule
    \end{tabular}}}
    \end{table}
    \begin{table}
    
    \caption{\label{tab:onestep_rmsse}Cumulative one-step ahead RMSSE (p-value 0.241). The best results are in bold.}
    \centering
    \resizebox{\linewidth}{!}{
        \review{
    \begin{tabular}[t]{lrrrrrrrrrr}
    \toprule
    \multicolumn{1}{c}{ } & \multicolumn{2}{c}{M3} & \multicolumn{2}{c}{CIF2016} & \multicolumn{2}{c}{HOSPITAL} & \multicolumn{2}{c}{TOURISM} & \multicolumn{2}{c}{FOOD} \\
    \cmidrule(l{3pt}r{3pt}){2-3} \cmidrule(l{3pt}r{3pt}){4-5} \cmidrule(l{3pt}r{3pt}){6-7} \cmidrule(l{3pt}r{3pt}){8-9} \cmidrule(l{3pt}r{3pt}){10-11}
    Model & Mean & Median & Mean & Median & Mean & Median & Mean & Median & Mean & Median\\
    \midrule
    TSAVG & 0.904 & 0.805 & 0.885 & 0.802 & 0.824 & 0.800 & 1.418 & 1.320 & \textbf{0.875} & \textbf{0.748}\\
    ETS & 0.892 & 0.796 & \textbf{0.759} & 0.751 & 0.823 & 0.802 & 1.423 & 1.348 & 0.887 & 0.762\\
    ARIMA & \textbf{0.882} & \textbf{0.780} & 0.770 & \textbf{0.738} & \textbf{0.819} & \textbf{0.796} & 1.282 & 1.205 & 0.973 & 0.804\\
    POOL & 1.110 & 0.839 & 25562.026 & 280.377 & 0.842 & 0.824 & \textbf{0.929} & \textbf{0.816} & 1.354 & 0.939\\
    \bottomrule
    \end{tabular}}}
    \end{table}
    \begin{table}
    
    \caption{\label{tab:onestep_smape}Cumulative one-step ahead SMAPE (p-value 0.948). The best results are in bold.}
    \centering
    \resizebox{\linewidth}{!}{
        \review{
    \begin{tabular}[t]{lrrrrrrrrrr}
    \toprule
    \multicolumn{1}{c}{ } & \multicolumn{2}{c}{M3} & \multicolumn{2}{c}{CIF2016} & \multicolumn{2}{c}{HOSPITAL} & \multicolumn{2}{c}{TOURISM} & \multicolumn{2}{c}{FOOD} \\
    \cmidrule(l{3pt}r{3pt}){2-3} \cmidrule(l{3pt}r{3pt}){4-5} \cmidrule(l{3pt}r{3pt}){6-7} \cmidrule(l{3pt}r{3pt}){8-9} \cmidrule(l{3pt}r{3pt}){10-11}
    Model & Mean & Median & Mean & Median & Mean & Median & Mean & Median & Mean & Median\\
    \midrule
    TSAVG & 0.115 & 0.069 & 0.107 & 0.084 & \textbf{0.168} & \textbf{0.158} & 0.291 & 0.262 & \textbf{0.244} & \textbf{0.211}\\
    ETS & 0.114 & 0.068 & \textbf{0.097} & 0.082 & 0.169 & 0.160 & 0.288 & 0.265 & 0.251 & 0.214\\
    ARIMA & 0.114 & \textbf{0.065} & 0.101 & \textbf{0.080} & 0.169 & 0.161 & 0.267 & 0.241 & 0.301 & 0.242\\
    POOL & \textbf{0.114} & 0.066 & 1.338 & 1.592 & 0.176 & 0.169 & \textbf{0.206} & \textbf{0.158} & 0.274 & 0.230\\
    \bottomrule
    \end{tabular}}}
    \end{table}
\subsection{Runtime}
Figure~\ref{fig:datasets_runtime} shows training runtimes for a selection of averaging methods. We clearly see the effect of increasing neighborhood size across all datasets and averaging methods. The runtime also increases significantly with the datasets' size due to the pairwise search for DTW neighbors. While the calculation times of P-AVG and S-AVG are similar because the corresponding weights are both quickly computed, the distance-based averaging method has increasing computation time because of needing to compute the neighborhood's averaged time series.
\begin{figure}[!ht]
    \center
    \includegraphics[width=\textwidth]{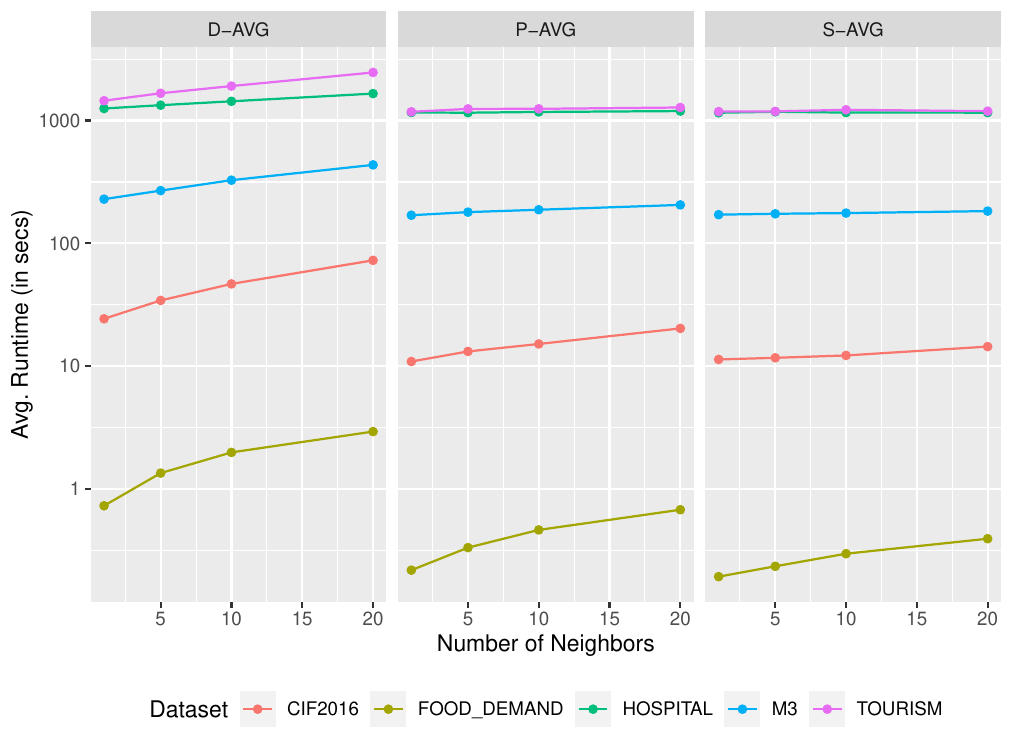}
    \caption{Runtime analysis for each dataset and selected averaging methods. The shown numbers are averaged runtimes of the TSCV procedure.}
    \label{fig:datasets_runtime}
\end{figure}

\review{
\subsection{Diagnostics for the Food Demand Data}\label{sec:data_application}

The transparency of TSAVG allows for advanced diagnostics. These are demonstrated on the Food Demand dataset which is about forecasting weekly demand of smart fridges. Its rather small size as well as short time series lead to understandable diagnostics on an individual time series level. The practical problem is also what initially motivated the idea of using pattern matching to improve forecasts. A selection of example time series of the dataset are available in Figure~\ref{fig:ts_example} and show its heterogeneity. While homogeneity is still present, individual $29$, for example, shows a very similar pattern to the first part of individual $3$. This is the similarity we ought to extract using DTW.
}

\begin{figure}[!ht]
    \center
    \includegraphics[width=0.8\textwidth]{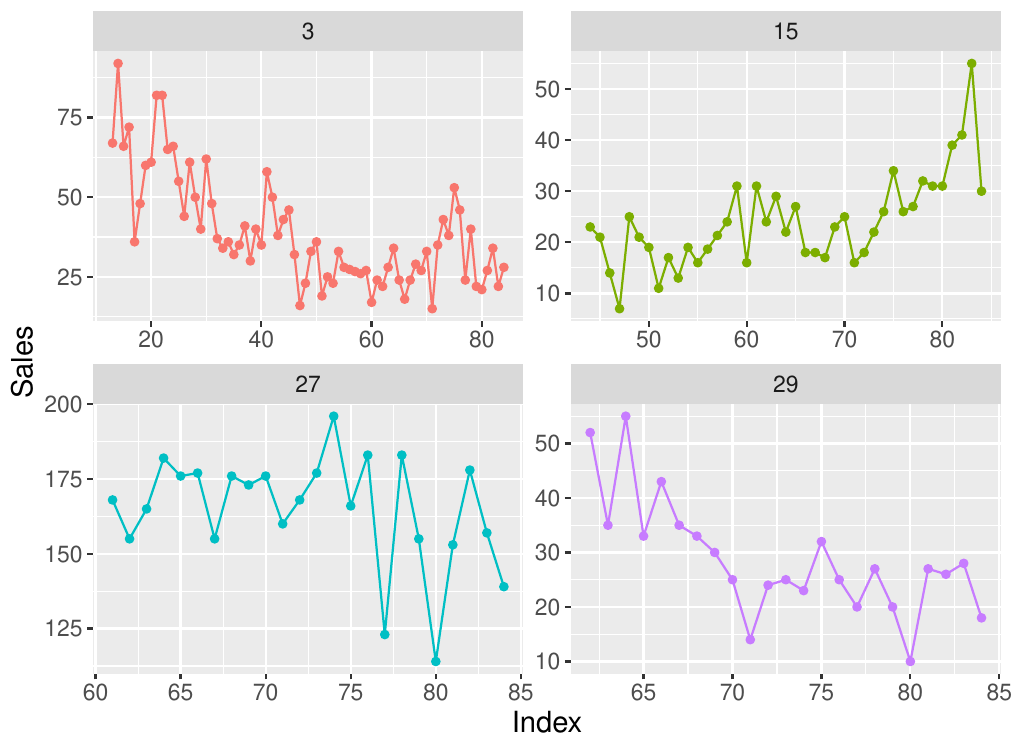}
    \caption{Example time series of the Food Demand dataset.}
    \label{fig:ts_example}
\end{figure}

\subsubsection{TSCV Diagnostics}

In each fold we forecast the time series to obtain one-step ahead forecasts as explained in Section \ref{sec:tscv}. The ETS models themselves are not being tuned using the folds. The best ETS model is solely chosen in each fold using AIC. That way the only tuning parameter is the size of each individual's neighborhood. 

We take a closer look at the TSCV procedure and the choice of the optimal tuning parameters. Figure \ref{fig:cv_dist} shows the corresponding TSCV scores and standard errors as defined above for the distance-based averaging methods. Each plot panel shows the results for a selected smart fridge.
For demonstration purposes we choose a grid of $\Theta=\{1,3,5,10,20\}$ number of neighbors and select the optimal number applying the one standard error rule. We do this for each forecast averaging methodology. The vertical lines indicate the choice of the tuning parameters. We observe different behaviors between the global averaging approach and the others (D-AVG, D-AVG-N). This can be explained by the similarity of the two methods. We also see that the scores for individual $3$ obtain a constant level because this individual is one of the longer time series, and hence it is not even possible to have more than $3$ neighbors.

\begin{figure}[!ht]
    \center
    \includegraphics[width=0.8\textwidth]{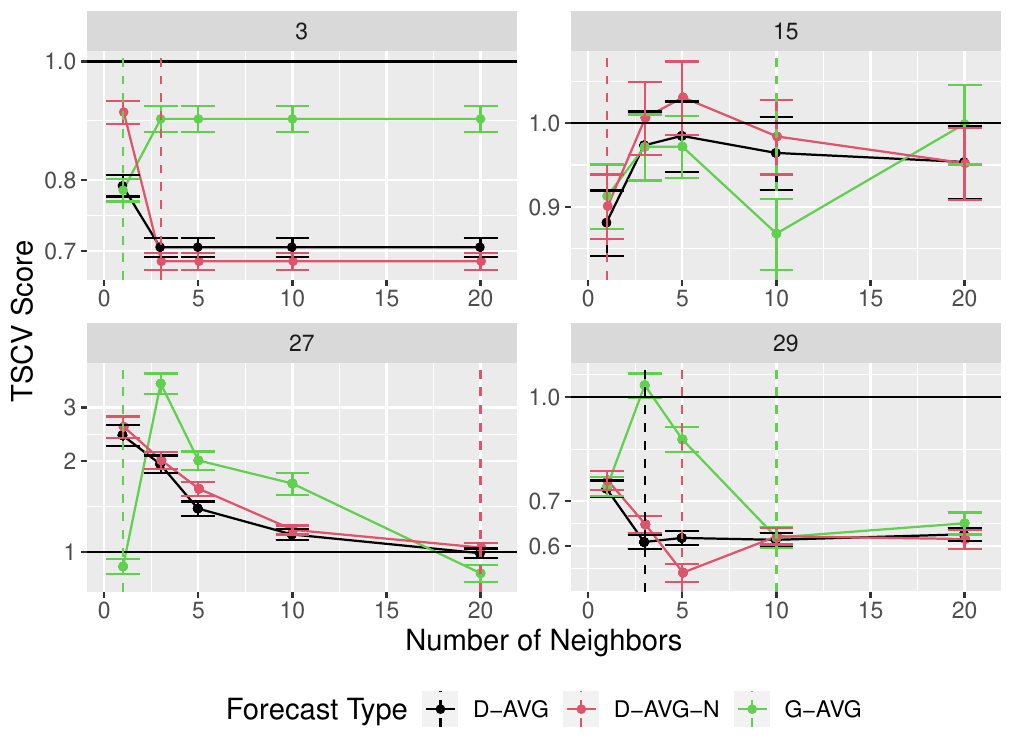}
    \caption{TSCV scores for distance-based averaging. The dashed vertical lines indicate the optimal number of neighbors.}
    \label{fig:cv_dist}
\end{figure}

The actual selected number of neighbors is given in Table~\ref{tab:no_neighbors}. See Table~\ref{tab:abbr} for a description of all averaging approaches.

\begin{table}[!ht]

\caption{\label{tab:no_neighbors}Optimal neighborhood sizes based on TSCV.}
\centering
\begin{tabular}[t]{l|rrrr}
\toprule
\multicolumn{1}{c}{ } & \multicolumn{4}{c}{Individuals} \\
\cmidrule(l{3pt}r{3pt}){2-5}
Forecast Type & 3 & 15 & 27 & 29\\
\midrule
S-NM-AVG & 1 & 5 & 20 & 20\\
D-NM-AVG & 3 & 5 & 20 & 10\\
S-AVG & 3 & 1 & 20 & 5\\
S-AVG-N & 3 & 1 & 20 & 5\\
D-AVG & 3 & 1 & 20 & 3\\
D-AVG-N & 3 & 1 & 20 & 5\\
G-AVG & 1 & 10 & 1 & 10\\
P-AVG-R & 3 & 1 & 10 & 5\\
P-AVG & 3 & 1 & 20 & 5\\
\bottomrule
\end{tabular}
\end{table}

\subsubsection{Model Evaluation}

First, we start off by evaluating the experiment on a global level. This is where the scaled errors come in handy since these are comparable also on individuals' level. Figure~\ref{fig:global_rmsse_f} shows boxplots of the RMSSE values as in Eq.~\eqref{eq:rmsse2}, split in both training and test errors. Given the errors are skewed, we display them on a log-scale. We clearly observe that the simple, non-models based averaging approaches yield worse results than the ETS benchmark model. However, we do see minor improvements of the ETS forecasts by using distance- and performance-based averaging methods. Another interest fact is that approaches not considering an individual's information (S-AVG-N, D-AVG-N) tend to yield also worse results, indicating that these informations should also be included in the forecast procedure, even though these time series might be very short and difficult to forecast. For completeness, Figure \ref{fig:global_rmsse} in the Appendix shows the original RMSSE values, but as mentioned before, this makes comparisons to the ETS benchmark model quite difficult.

\begin{figure}[!ht]
    \center
    \includegraphics[width=0.8\textwidth]{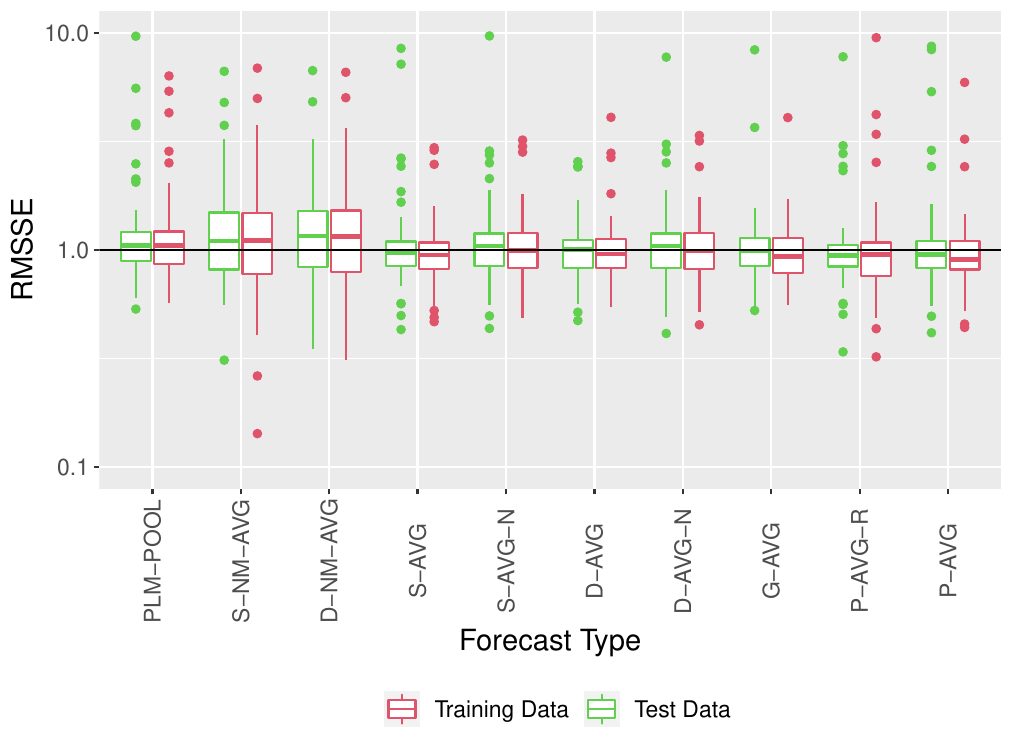}
    \caption{Global RMSSE with respect to ETS forecasts.}
    \label{fig:global_rmsse_f}
\end{figure}

Next, we want to dig deeper into the performance of selected individuals. For this, we selected $4$ of the $43$ individuals to analyze in more detail. Figure \ref{fig:rmsse_f} shows the RMSSE with respect to ETS forecasts for these selected fridges. We also show standard errors of the RMSSE. Due to the different behaviors of the individuals, we also observe different performances of the averaging approaches. While for individual $3$ we can improve forecasts using model-based averaging methods, this is not the case for individual $27$. We also observe that the panel benchmark model PLM-POOL does not always yield better forecasts and is often even surpassed by any of the averaging approaches.

\begin{figure}[!ht]
    \center
    \includegraphics[width=0.8\textwidth]{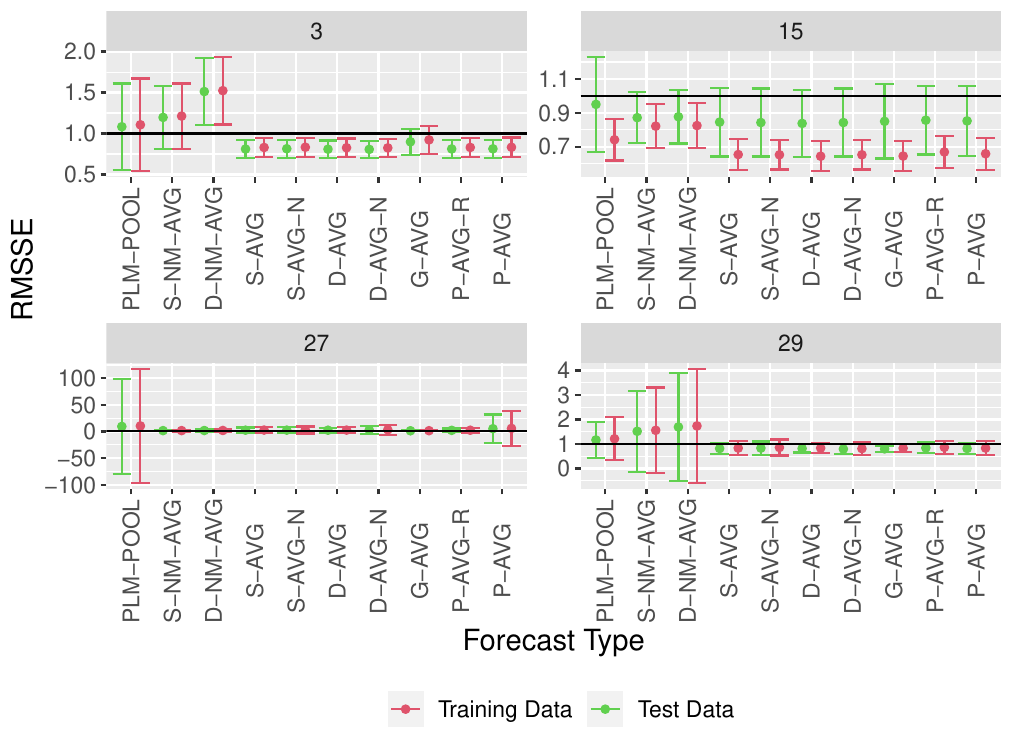}
    \caption{Individual RMSSE with respect to ETS forecasts.}
    \label{fig:rmsse_f}
\end{figure}

We can dig even deeper into evaluating the errors when looking at the evolution of the RMSSE. For reasons of clarity, Figure \ref{fig:run_rmsse} shows the running RMSSE with respect to the RW forecasts for the distance-based averaging methods and the two benchmark methods of ETS and PLM-POOL. The dashed vertical line indicates the split between training and test periods. These plots really show the possibility of improving the benchmark forecasts by smartly taking averages of neighbors' forecasts. Similar plots for the remaining averaging methods can be found in the appendix.

\begin{figure}[!ht]
    \center
    \includegraphics[width=0.8\textwidth]{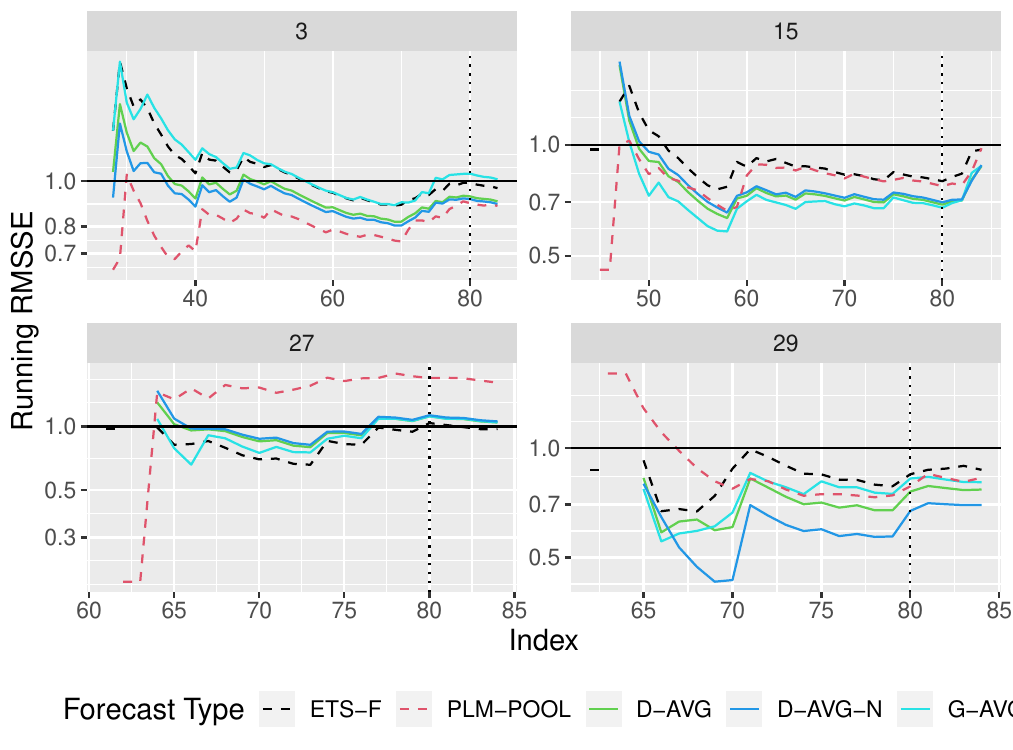}
    \caption{Running RMSSE with respect to random-walk forecasts for distance-based averaging. The dotted lines indicate the split between training and test set.}
    \label{fig:run_rmsse}
\end{figure}

\subsubsection{Further Diagnostics}

To understand better the differences in performance on individual level, we perform further diagnostics. For fixed neighborhood size of at most $5$ neighbors, we first consider the median, minimum and maximum normalized DTW distance  and their evolution over time. This analysis might already give hints on the homogeneity of the neighborhoods (Figure \ref{fig:dtw_evol}). We observe that for individuals $3$, $15$ and $29$ the distances and hence the neighborhoods seem to stabilize after the first few steps while for individual $27$ the normalized distance increases until the end. This means that this individual becomes harder and harder to match. Overall, individual $27$ is too heterogeneous and lacks homogeneity to improve forecasts as also indicated by the high mean $2$-Wasserstein distance in Figure~\ref{fig:wdists}.

\begin{figure}[!ht]
    \center
    \includegraphics[width=0.8\textwidth]{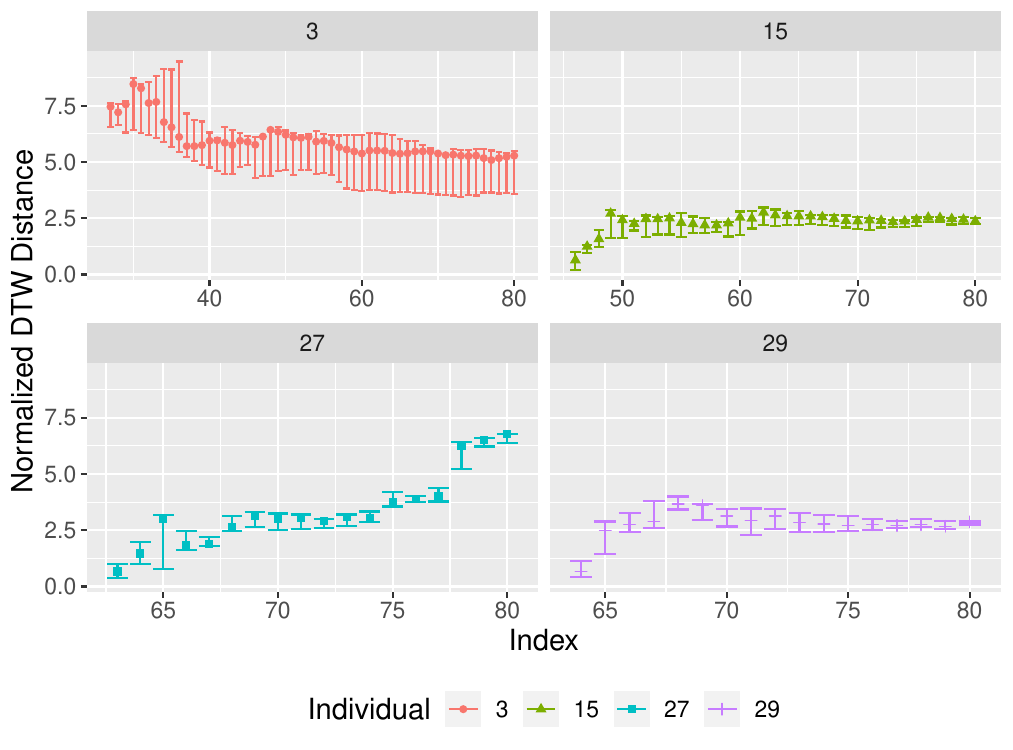}
    \caption{Normalized DTW distance evolution over time.}
    \label{fig:dtw_evol}
\end{figure}

\begin{figure}[!ht]
    \center
    \includegraphics[width=0.8\textwidth]{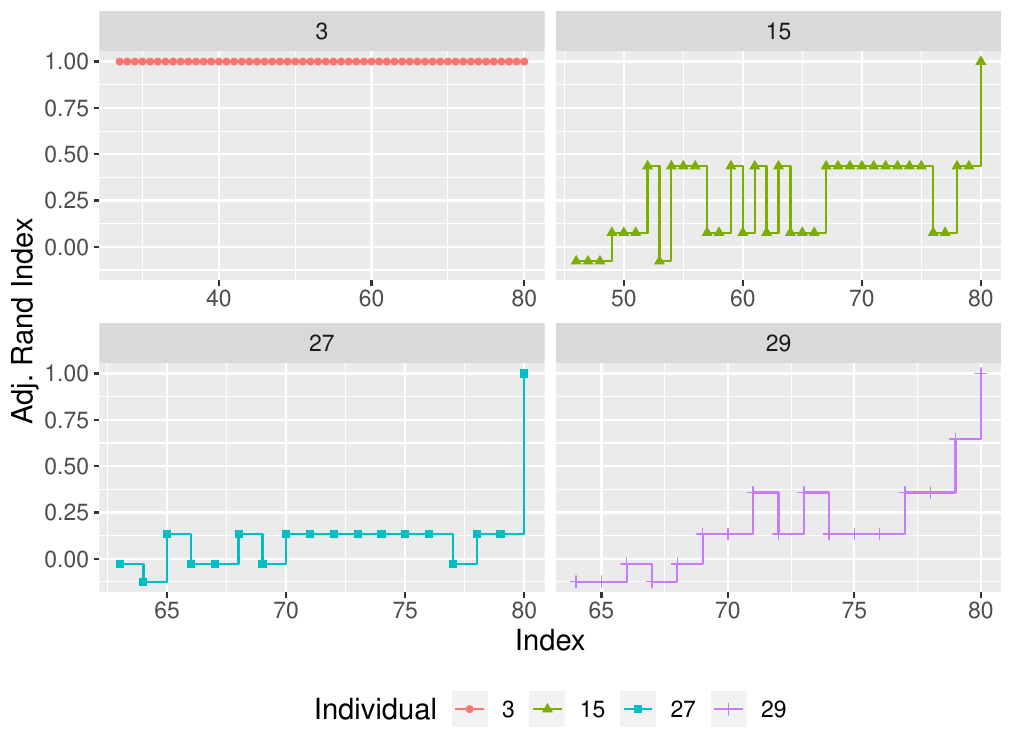}
    \caption{Adjusted Rand index over time.}
    \label{fig:rand_evol}
\end{figure}

Another way to evaluate the neighborhoods is as follows. We consider the final training neighborhoods as the ground truth and compare each neighborhood yielded in the training process to this ground truth using the adjusted Rand index (\citet{randindex}, \citet{adj_rand}). Figure \ref{fig:rand_evol} shows the adjusted Rand index for each time step. We see that individual $3$ is perfectly matched from the beginning which is also not suprising since this individual only has very few possible neighbors. However, there is much more variability present for the remaining smart fridges, and especially for individual $27$ the adjusted Rand index is hardly increasing, indicating difficulties in finding the proper neighbors for this individual.

Finally, we take a look at the $2$-Wasserstein distances as discussed in \ref{sec:th_mot}. Since we only introduced the Wasserstein for a basic $ANN$ model, we want to go into more detail here. The one-step ahead point forecasts which we obtain and use for averaging are the means of the forecast distribution, hence we can compute the Wasserstein distances in a straight-forward way. Namely, we have that

\begin{align}\label{eq:gen_wdist}
    W_2^2(\hat X_{n+1|n},\hat Y_{n+1|n}) &= (\hat x_{n+1|n}-\hat y_{n+1|n})^2 + (\hat\sigma_X-\hat\sigma_Y)^2,
\end{align}
where $X,Y$ are two arbitrary ETS models with realized point forecasts $\hat x,\hat y$ as well as estimated standard deviations $\hat\sigma_X,\hat\sigma_Y$, respectively.

Figure~\ref{fig:wdists} shows the mean $2$-Wasserstein distances as in Eq.~\eqref{eq:gen_wdist} per selected individual. The neighborhoods considered here are the same as in Figure~\ref{fig:dtw_evol}. The dashed lines indicate the overall mean distance. We clearly observe large distances for individual $27$ which is consistent with the rather bad results. Since the DTW distances as in Figure~\ref{fig:dtw_evol} are increasing for this individual, we cannot expect the Wasserstein distances to be small, and thus also cannot guarantee a reduction in forecast error. Indeed, the overall mean distance is $12.5$ which is more than twice as large as the mean values for the remaining individuals ($3.2$ for individuals $3,15$, and around $5.2$ for individual $29$.) selected in the analysis. This result empirically verifies the extension of our motivating theoretical results.

\begin{figure}[!ht]
    \center
    \includegraphics[width=0.8\textwidth]{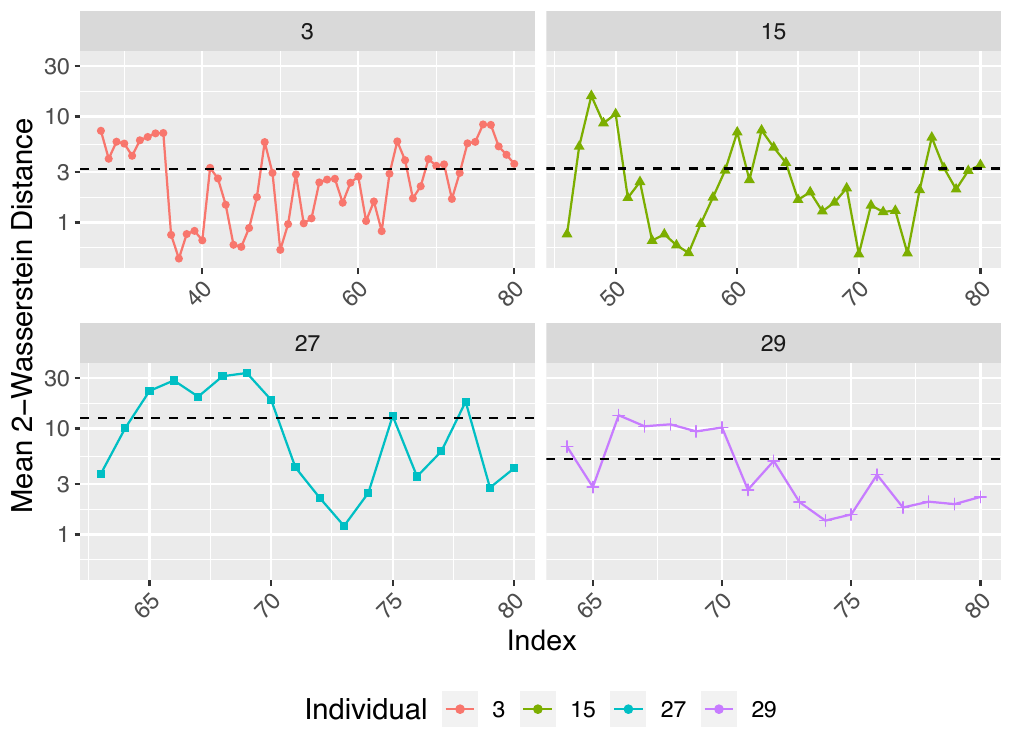}
    \caption{Wasserstein distances of the forecast distributions.}
    \label{fig:wdists}
\end{figure}

\subsubsection{The Best Model}


\begin{table}
    \caption{\label{tab:tbl:errors_fridges}Test errors for benchmark models and best averaging method (P-AVG-R). The best models are in bold.}
    \centering
    \resizebox{\linewidth}{!}{
    \begin{tabular}{lrrrrrrrrrrrrrrrr}
    \toprule
    \multicolumn{1}{c}{ } & \multicolumn{4}{c}{sMAPE} & \multicolumn{4}{c}{MAE} & \multicolumn{4}{c}{RMSE} & \multicolumn{4}{c}{RMSSE} \\
    \cmidrule(l{3pt}r{3pt}){2-5} \cmidrule(l{3pt}r{3pt}){6-9} \cmidrule(l{3pt}r{3pt}){10-13} \cmidrule(l{3pt}r{3pt}){14-17}
    Model & 3 & 15 & 27 & 29 & 3 & 15 & 27 & 29 & 3 & 15 & 27 & 29 & 3 & 15 & 27 & 29\\
    \midrule
    TSAVG & \textbf{0.147} & 0.312 & \textbf{0.083} & 0.296 & \textbf{4.094} & 12.184 & \textbf{13.177} & 6.699 & \textbf{5.353} & 12.946 & \textbf{16.033} & \textbf{6.962} & \textbf{0.439} & 1.901 & \textbf{0.609} & \textbf{0.668}\\
    ETS & 0.163 & 0.298 & 0.097 & \textbf{0.294} & 4.531 & \textbf{11.656} & 15.466 & \textbf{6.566} & 5.559 & \textbf{12.321} & 16.883 & 7.103 & 0.456 & \textbf{1.809} & 0.642 & 0.682\\
    ARIMA & 0.222 & 0.332 & 0.097 & 0.367 & 6.127 & 13.299 & 15.467 & 7.918 & 6.765 & 14.247 & 16.885 & 8.832 & 0.554 & 2.092 & 0.642 & 0.848\\
    POOL & 0.225 & \textbf{0.297} & 0.169 & 0.310 & 5.073 & 12.097 & 22.788 & 7.160 & 6.873 & 12.890 & 30.948 & 7.941 & 0.595 & 1.870 & 1.124 & 0.893\\
    \bottomrule
    \end{tabular}}
 \end{table}
By miniming the mean RMSSE values, we can choose the optimal averaging technique on the dataset. With a mean RMSSE value with respect to the ETS base forecast of $1.08$ the optimal averaging method is P-AVG-R. This rather high value is due to bad performance of a small amount of individuals, skewing the mean value. The median values accounts to $0.85$. Table~\ref{tab:tbl:errors_fridges} shows the results in terms of a number of error measures. While for individual $3$ the averaging yields uniformly better results, it is not the case for individual $15$ where the base ETS model and the pooled global model outperform the averaging methodology.

To test statistical significance, we follow the suggestions by \citet{demsar2006statistical}. A non-parametric Friedman test is performed at an $\alpha$-level of $5\%$ for each error measure using the results for all individuals. This yields significant differences for all error measures ($p<0.01$). Post-hoc pairwise Wilcoxon tests with Holm's $\alpha$-level correction are then performed. However, these test does not yield any significant pairwise differences in the methods for any error measure.

Additionally,  we take a look at the distributional properties of the RMSSE values obtained on the test set and compare the methods based on how many individuals could be improved by averaging. First, Figure \ref{fig:rmsse_perc} shows the ordered RMSSE test set values with respect to the ETS forecast for each averaging method. The vertical lines indicate the percentile of individuals with smaller error compared to the ETS benchmark. For an overall useful methodology we want the vertical lines to be greater than $0.5$, meaning we can improve the forecasting for more than $50\%$ of the individuals. This is the case for the global averaging method G-AVG, the performance-based one P-AVG, P-AVG-R, as well as the simple S-AVG, whereby the best improvement is obtained by the weighted average of using errors of the refitted models (P-AVG-R). Furthermore, we see that all methods not including the individual, namely D-AVG-N and S-AVG-N, yield worse results. Also, the non-model based methods which only regard the DTW matchings do not seem to be very useful at all since they only decrease the forecast error compared to the ETS benchmark for around $36\%$, and $38\%$ of the individuals, respectively.

\begin{figure}[!ht]
    \center
    \includegraphics[width=0.8\textwidth]{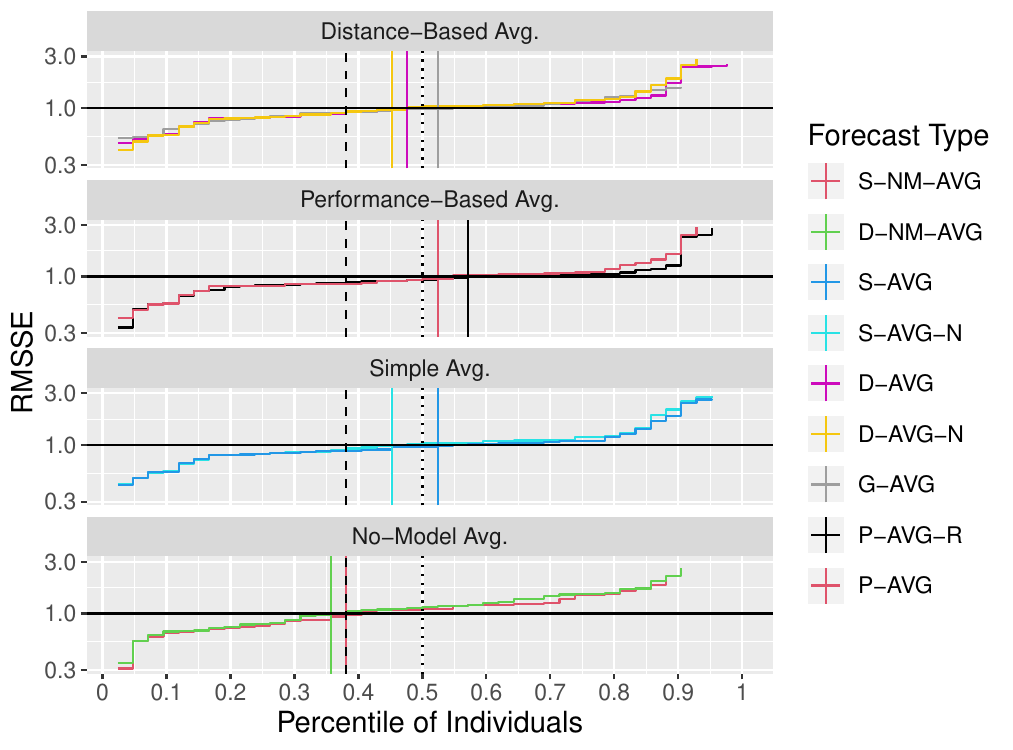}
    \caption{Percentile plots for each averaging method with respect to ETS forecast. The dashed lines show the performance of the global panel benchmark model. The dotted lines indicate the $50\%$ percentile.}
    \label{fig:rmsse_perc}
\end{figure}

\section{Conclusions}\label{sec:conc}

We present a general framework for the improvement of individual forecasts in a set of possibly heterogeneous time series. It is based on a theoretical motivation regarding simple state-space models in terms of exponential smoothing, however, this motivation may be extended to more general models. In this theoretical part of the work, new approaches are proposed by computing the DTW distance explicitly on random processes as well as using the Wasserstein distance to compare the forecast distribution of state-space models. Additionally, dynamic time warping is introduced and described in a far more rigorous way.

We apply a sequence matching procedure in dynamic time warping which allows us to find similar time series in terms of their shape and which works for any two time series. To this end, we emphasize the use of asymmetric matching and the corresponding extension of the global averaging methodology to this type of matching. The transparency of the algorithm also enables us to perform diagnostics and to understand when this procedure is appropriate and yields reasonable results. This point also differentiates our work to machine learning approaches which tend to be intransparent and require a large total number of observations to work efficiently.

\review{
    The models we use for our analysis are ETS models, yet any family of models can be used in our procedure, making it very flexible in practice. It also extends automatic forecasting frameworks such as ARIMA or ETS naturally. While the methodology may not be able to improve performance of the base model all the time, it still yields competitive results. Overall, this framework fits many real world applications where one encounters a set of heterogeneous and possibly short time series.
}

Some aspects are still to be considered. Extension of the work could be using an adaptive data-driven number of neighbors for each time series instead of a fixed one.
Further, a robustification of the procedure could be beneficial since dynamic time warping especially is quite sensitive to outliers in the time series.

\newpage
\section*{Computational Details}
The results in this paper were obtained using
\proglang{R}~4.1.3. Figures were produced by \pkg{ggplot2} \citep{ggplot2} and tables were created using \pkg{knitr} \citep{knitr} as well as \pkg{kableExtra} \citep{kableExtra}. \proglang{R} itself
and all packages used are available from the Comprehensive
\proglang{R} Archive Network (CRAN) at
\url{https://CRAN.R-project.org/}.

The source code of this paper in form of the \proglang{R} package \pkg{TSAvg} is available from GitHub at \url{https://github.com/neubluk/TSAvg}.

\section*{Acknowledgments and Disclosure of Funding}

We acknowledge support from the Austrian Research Promotion Agency (FFG), Basisprogramm project ``Meal Demand Forecast'', and Schrankerl GmbH for the cooperation and access to their data.

\clearpage

\appendix
\section{Methodology}
\subsection{Dynamic Time Warping}\label{sec: dtw2}
The set of all allowed warping functions $\Phi$ depends on the type of time-warping which is usually parametrized by a \textit{step pattern}. This pattern defines the allowed values of $\phi(k)$ given $\phi_k(1),\dots,\phi_k(k-1)$ for each $k$.

Basic properties of a warping function are as follows.
\begin{itemize}
    \item \textbf{Monotonicity}. We have $\phi_{\mathbf X}(k)\leq\phi_{\mathbf X}(k+1),\phi_{\mathbf Y}(k)\leq\phi_{\mathbf Y}(k+1)$ for all $k$. This allows for only reasonable matchings where the order of the sequences is unchanged.
    \item \textbf{Slope}.
    The possible (local) slope of $\phi$ is given by
    \begin{align*}
        Q_{k_s} = \cfrac{\sum_{i=1}^{T_{k_s}} q_i^{(k_s)}}{\sum_{i=1}^{T_{k_s}} p_i^{(k_s)}},
    \end{align*}
    where $s$ denotes the step pattern for the warping function, $T_{k_s}$ gives the number of allowed steps in the $k$-th step pattern path, and $p_i^{(k_s)},q_i^{(k_s)}$ gives the actual size of steps in $\mathbf X$ and $\mathbf Y$ direction, respectively. Thus, we have that $p_i^{(k_s)}=\phi_{\mathbf X}(i+1)-\phi_{\mathbf X}(i)$ for any $i$, and $q_i^{(k_s)}$ analogously. 

    The slope is therefore constrained by $\min_{k_s}Q_{k_s}\leq Q_{k_s}\leq\max_{k_s}Q_{k_s}$.
    \item \textbf{Normalizability}. Depending on $\phi$, the DTW distance can also be normalized by $M_\phi=\sum_{k=1}^Kw_\phi(k)$, i.e.~we have $\text{nDTW}(\mathbf X,\mathbf Y):=\text{DTW}(\mathbf X,\mathbf Y)/M_\phi$. 
\end{itemize}

More properties arise when moving to more concrete warping functions.

\subsubsection{Symmetric DTW}

The most common type is the so-called \textit{symmetric} matching pattern where
\begin{itemize}
    \item the endpoints must be matched, meaning we have $\phi(1)=(\phi_\mathbf X(1),\phi_\mathbf Y(1))=(1,1)$ and $\phi(K)=(\phi_\mathbf X(K),\phi_\mathbf Y(K))=(n,m)$,
    \item all points must be matched, i.e. $\phi_{\mathbf X}(k+1)-\phi_{\mathbf X}(k)\leq 1,~\phi_{\mathbf Y}(k+1)-\phi_{\mathbf Y}(k)\leq 1$ (this can also be seen as a continuity constraint), 
    \item allowed steps to get to $(i,j)$ are $(i-1,j)\rightarrow (i,j),~(i,j-1)\rightarrow (i,j),~(i-1,j-1)\rightarrow (i,j)$ (see also Eq. \eqref{dtw:step:sym}), implying that
    \item the slope of $\phi$ is unconstrained allowing for an arbitary amount of time stretching or compression because $Q = q/p,~p,q\in\{0,1\}$ and $0\leq Q\leq\infty$, and
    \item the weights are chosen to be $w_\phi(k)=\phi_{\mathbf X}(k)-\phi_{\mathbf X}(k-1) + \phi_{\mathbf Y}(k)-\phi_{\mathbf Y}(k-1)$ such that $M_\phi = n+m$.
\end{itemize}

In fact, the step pattern can be written using a \textit{Dynamic Programming} \citep{dynamic_prog} approach by

\begin{align}\label{dtw:step:sym}
    g(i,j)=\min\left(
        \begin{matrix}
        g(i,j-1)+d(i,j)\\
        g(i-1,j-1)+2d(i,j)\\
        g(i-1,j)+d(i,j)
    \end{matrix}
    \right),\quad 1\leq i\leq n,1\leq j\leq m,
\end{align}
such that $g(1,1)=w(1)d(1,1)=2d(1,1)$ and $g(n,m)=:\text{DTW}_\text{sym}(\mathbf X,\mathbf Y)$, where $n=|\mathbf X|,m=|\mathbf Y|$. The resulting matrix $G$ of Eq.~\eqref{dtw:step:sym} is called the \textit{warping matrix}. The corresponding warping path can be extracted from these recursive calculations by backtracking.

This distance is naturally symmetric and positive definite. However, it is not a metric since the triangle inequality is usually not fulfilled.

This pattern type can be useful when considering sequences of similar lengths. However, for differently sized sequences this might not work reasonably anymore due to above constraints. 

\subsubsection{Asymmetric DTW}

We are particularly interested in subsequence matching where we relax the endpoint constraints. This is also denoted as \textit{open-begin open-end} (OBE) matching \citep{dtwR}. The corresponding optimization problem looks as follows.
$$
\begin{aligned}
    \text{DTW}_{\text{OBE}} (\mathbf X,\mathbf Y) := \min_{1\leq p\leq q\leq m} \text{DTW}(\mathbf X,\mathbf Y^{(p,q)}),
\end{aligned}
$$
where $\mathbf Y^{(p,q)}=(\mathbf Y_p,\dots,\mathbf Y_q)'\in\mathbb R^{(q-p+1)\times d}$ is a subsequence of $\mathbf Y$, and $n\leq m$. This type of matchings requires an \textit{asymmetric} matching pattern. The most basic one is given by
\begin{align}\label{eq:asym_dtw}
    g(i,j)=\min\left(
        \begin{matrix}
        g(i-1,j)+d(i,j)\\
        g(i-1,j-1)+d(i,j)\\
        g(i-1,j-2)+d(i,j)
    \end{matrix}
    \right),\quad 1\leq i\leq n, p\leq j\leq q,
\end{align}
implying that continuity is not imposed anymore. In fact, information of $\mathbf Y$ can be skipped. The slope is constrained by $\min Q=0,~\max Q=2$. Further, the distance is only normalizable by $n$ since we consider subsequences of $\mathbf Y$ of different lengths. Because of the subsequence matching, we have an initial value of $g(1,p)=d(1,p)$ and $g(n,q)=:\text{DTW}_{\text{asym, OBE}} (\mathbf X,\mathbf Y)$.

\subsubsection{Other Step Patterns}

Many more step patterns exist with different types of slope constraints and other properties. Unfortunately, it is often not very clear which step pattern really is the most appropriate one to use. A general family of step patterns is defined in \citet{dtw} by the notion of \textit{symmetricP$x$}, and \textit{asymmetricP$x$}, respectively, where $x$ controls the slope parameter. Usual values are $0$, $0.5$, $1$ and $2$. The larger the slope can possibly be, the more complicated the dynamic programming equations tend to get. As a limiting step pattern one obtains the \textit{rigid} step pattern for $x\rightarrow\infty$. As suggested in \citet{dtwR}, this step pattern is only reasonable when considering an OBE matching, since it finds the most appropriate subsequence without any gaps and, in fact, does not perform any time warping.

To summarize, DTW is a very flexible sequence matching method which yields the optimal matching as well as an associated distance between any two sequences.

\subsection{DTW Averaging}\label{sec: dtw_avg}

As all dissimilarity measures, DTW can also be used to find a barycenter of a set of sequences $\mathcal S$. A barycenter is usually defined to minimize the sum of distances in the set. Generally, $c\in X$ for some metric space $(X,\delta)$ is called a barycenter of $Y\subset X$ if
\begin{align}
    \label{barycenter}
    \sum_{x\in Y}\delta(c,x)\leq\sum_{x\in Y}\delta(y,x),
\end{align}
for any $y\in X$. 

In terms of time series and DTW we have $Y=\mathcal T$ as the finite set of time series of interest, $X\supset Y$ as the set of all possible time series with some maximum length, and $\delta=\text{DTW}$. However, the space of possible solutions is very large which makes the search for the average difficult. Therefore, approximative solutions have been developed.

In the context of DTW, it is not straightforward to provide a definition for an average. For the symmetric case of DTW, \citet{dtw_pw_avg} have considered pairwise, coordinate-wise averaging, where two sequences are averaged to one sequence until only one sequence is left. This method is easy to implement, but a big downside is that it heavily depends on the order of sequences. A different approach, introduced in \citet{dba}, is global averaging. Starting from an initial sequence, the average is updated in each iteration based on all matchings of all sequences in the set. As mentioned in their paper, this heuristic naturally reduces the sum 
of the warping distances to the average in Eq.~\eqref{barycenter}.

One aspect to consider is the length of the averaging sequence. In pairwise averaging, this average can grow twice as long in each step. In global averaging, the length of the resulting average is fixed to the length of the initial sequence.

We adapt the global averaging methodology to the asymmetric DTW use case as follows. Denote $\mathcal T$ a set of time series of different lengths. Our aim is to have a longer barycenter than the time series of interest, hence as the initial average time series we take the longest one available, i.e. $\mathbf C_0=\text{argmax}_{\mathbf X\in\mathcal T}|\mathbf X|$. Then iteratively we compute $\mathbf C_i,~i=1,2,\dots$: 
\begin{itemize}
    \item Compute $\text{DTW}_\text{asym, OBE}(\mathbf X,\mathbf C_i)$ for all $\mathbf X\in\mathcal T$ to obtain $\phi^{(\mathbf X)}(k):=(\phi_{\mathbf X}(k),\phi_{\mathbf C_i}(k))$ for $k=1,\dots,K_\phi$.
    \item For each time step $t=1,\dots,|\mathbf C_0|$ of $\mathbf C_i$, denoted by $\mathbf C_{i,t}$, let $\mathbb C_{i,t}:=\left\{\mathbf X_j:\mathbf X\in\mathcal T, \phi^{(\mathbf X)}=(j,t)\right\}$ be the set of all associated $\mathbf X$ time steps.
    \item Update each time step of the averaging sequence by
    \begin{align*}
        \mathbf C_{i+1,t} = 
        \begin{cases}
            \cfrac{1}{|\mathbb C_{i,t}|}\sum_{\mathbf Z\in\mathbb C_{i,t}}\mathbf Z & \text{if}\quad |\mathbb C_{i,t}|>0,\\
            \mathbf C_{i,t} &\text{otherwise.}
        \end{cases}
    \end{align*}
\end{itemize}

We perform this iteration for a fixed number of $I$ times or until the sum in Eq. \eqref{barycenter} does not decrease anymore. We denote now the averaging time series as $\text{aDBA}(\mathcal T):=\mathbf C_{\min(S,I)}$ where $S>S^\ast$ is the time of no further reduction after a start-up period $S^\ast$ and the averaging function $\text{aDBA}: \mathcal T\rightarrow \mathcal Z$ where $\mathcal Z$ denotes the set of all possible time series.

\subsection{Theoretical Motivation}\label{sec:th_mot}

We want to motivate our approach of using the DTW distance. For that we consider just a simple case. Let $X$ be an $ANN$ model \citep{expsmoothing_book}. An $ANN$ model is also known as \textit{Exponential Smoothing}. It does not have any trend or seasonality component, and for a time series $(X_t,t=1,\dots,n)$ it is given by the recursion
\begin{align}\label{eq:ann}
    l_t^X &= \alpha X_t + (1-\alpha)l_{t-1}^X,\\
    \hat X_{t+h|t} &= l_t^X,\quad h>0,\nonumber
\end{align}
where $l^X$ denotes the level component of the model which is also equal to the flat forecast $\hat X_{t+h|t}$ for any $h>0$. The model parameter $\alpha$ is usually found by minimizing the sum of squared forecast errors or by maximum likelihood.

\subsubsection{Theoretical DTW Computation}

Let $X,Y$ be two independent $ANN$ models. Without loss of generalization, we assume both initial values are equal to $0$, i.e. $l_0^X=l_0^Y=0$ (otherwise we could just look at $X-l_0^X$). We write $X=(X_1,\dots,X_n)'\sim ANN(\alpha_X,\sigma^2_X)$ and $Y=(Y_1,\dots,Y_n)'\sim ANN(\alpha_Y,\sigma^2_Y)$. For both we consider an equal length of $n=|X|=|Y|$. Considering Eq. \eqref{eq:ann}, we can rewrite the recursive equation using a state-space representation as
\begin{alignat*}{2}
    X_t&= l_{t-1}^X+\epsilon_t, & \quad Y_t&= l_{t-1}^Y+\eta_t\\
    l_t^X&= l_{t-1}^X+\alpha_X\epsilon_t, & \quad l_t^Y&= l_{t-1}^Y+\alpha_Y\eta_t,
\end{alignat*}
where the innovations $\epsilon_t\overset{iid}{\sim}N(0,\sigma^2_X),\eta_t\overset{iid}{\sim}N(0,\sigma^2_Y)$ are assumed to be also pairwise independent for $t=1,\dots,n$. The innovations are the one-step ahead forecast errors given by $\epsilon_t=X_t-l_{t-1}^X$. The states $l^X$ are latent and only $X_t$ itself is observable.

Next, we give an explicit expression for the asymmetric DTW distance between $X$ and $Y$ when considering the squared $L^2$ cross-distance. We denote $X$ to be independent of $Y$ if and only if $X_i$ is independent of $Y_j$ for all $i,j$. This is a simple assumption, however, following results can be easily extended to a more general setup.
\begin{lemma}
    Let $X\sim ANN(\alpha_X,\sigma^2_X),~Y\sim ANN(\alpha_Y,\sigma^2_Y)$ be two independent and centered Exponential Smoothing processes of length $n$. Let $d(i,j):=\mathbb E[(X_i-Y_j)^2]$ denote the cross-distance between $X$ and $Y$.\\
    Then the asymmetric DTW distance is given by
    \begin{align}\label{eq:dtw_ann}
        \text{DTW}_\text{asym}(X,Y)=\sigma^2_X\left(n+ {\binom{n}{2}} \alpha_X^2\right)+\sigma^2_Y \left(n+\left\lfloor \frac{n^2}{4}\right\rfloor\alpha_Y^2\right).
    \end{align}
\end{lemma}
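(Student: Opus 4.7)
The proof splits naturally into a probabilistic half (computing the cross-distance in closed form) and a combinatorial half (minimizing over admissible warping paths). I would begin by unrolling the state-space recursion: since $l_0^X = 0$, iteration yields the MA-style representation $X_i = \epsilon_i + \alpha_X\sum_{k=1}^{i-1}\epsilon_k$, and symmetrically $Y_j = \eta_j + \alpha_Y\sum_{k=1}^{j-1}\eta_k$. Because $X$ and $Y$ are independent and both innovation streams are centered, the cross term in $\mathbb{E}[(X_i-Y_j)^2]$ vanishes, and one gets the clean closed form $d(i,j) = \mathbb{E}[X_i^2] + \mathbb{E}[Y_j^2] = \sigma_X^2(1 + (i-1)\alpha_X^2) + \sigma_Y^2(1 + (j-1)\alpha_Y^2)$.

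Next I would recast the asymmetric DTW as a combinatorial optimization. The recursion \eqref{eq:asym_dtw} advances $i$ by exactly one at every step, so every admissible path has length $n$ with $\phi_{\mathbf X}(k)=k$, and is fully encoded by the sequence $(j_1,\dots,j_n)$ defined by $j_i := \phi_{\mathbf Y}(k)$ when $\phi_{\mathbf X}(k)=i$. The endpoint conditions force $j_1 = 1$, $j_n = n$, the step pattern imposes $j_{i+1}-j_i \in \{0,1,2\}$, and the weights in \eqref{eq:dtw} are all unity under this pattern. Substituting the closed form for $d(i,j)$ into $\sum_{i=1}^n d(i,j_i)$ makes the cost split additively: the $i$-dependent part collapses via $\sum_{i=1}^n (i-1) = \binom{n}{2}$ to $\sigma_X^2\bigl(n + \binom{n}{2}\alpha_X^2\bigr)$, independent of the path, and this accounts exactly for the first summand in \eqref{eq:dtw_ann}.

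What remains is to minimize $\sum_{i=1}^n(j_i-1)$ over admissible paths. I would first establish the pointwise lower bound $j_i \ge \max(1, 2i-n)$: the bound $j_i \ge 1$ follows from monotonicity together with $j_1 = 1$, while $j_i \ge 2i-n$ follows from $j_n - j_i \le 2(n-i)$ since each remaining step advances $j$ by at most $2$. I would then exhibit the \emph{stay-low-then-climb} path $j_i^\ast := \max(1, 2i-n)$, check that its increments lie in $\{0,1,2\}$ and its endpoints match, so it is admissible and saturates the lower bound pointwise. A short case split on the parity of $n$ ($n=2k$ producing $k^2$ and $n=2k+1$ producing $k(k+1)$) then collapses the sum into the uniform expression $\lfloor n^2/4\rfloor$. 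Multiplying by $\sigma_Y^2\alpha_Y^2$ and adding the path-independent $n\sigma_Y^2$ yields the second summand.

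The main obstacle is the combinatorial step: verifying that the two pointwise lower bounds $j_i \ge 1$ and $j_i \ge 2i-n$ are \emph{simultaneously} saturated by a single admissible path (rather than only individually tight), and reconciling the even and odd cases into the single floor expression in the statement. Once this is done, the rest is pure bookkeeping.
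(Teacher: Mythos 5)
Your proof is correct, and it reaches the same two key facts as the paper --- the closed form $d(i,j)=\sigma_X^2(1+(i-1)\alpha_X^2)+\sigma_Y^2(1+(j-1)\alpha_Y^2)$ and the observation that monotonicity of $d$ in $j$ forces the optimal warping to keep $j$ as small as possible --- but it certifies optimality by a different route. The paper works forward through the dynamic-programming recursion \eqref{eq:asym_dtw}: it argues that the minimum in each cell is attained by the $g(i-1,j-2)$ branch whenever defined, handles the boundary cases $g(\tilde i,1)$ and $g(\tilde i,2)$ separately, and then asserts the closed form $g(i,j)=i(\sigma_X^2+\sigma_Y^2)+\binom{i}{2}\sigma_X^2\alpha_X^2+\lfloor j^2/4\rfloor\sigma_Y^2\alpha_Y^2$ for all reachable cells $2i-j\geq 1$ by induction, evaluating at $(n,n)$. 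You instead bypass the DP table entirely: you parametrize admissible warpings as sequences $(j_1,\dots,j_n)$ with $j_1=1$, $j_n=n$, increments in $\{0,1,2\}$, split the cost into a path-independent $X$-part (giving $\binom{n}{2}$) and a path-dependent $Y$-part, prove the pointwise lower bound $j_i\geq\max(1,2i-n)$, and exhibit the admissible stay-low-then-climb path that saturates it, with the parity split yielding $\lfloor n^2/4\rfloor$. Your version buys a self-contained optimality certificate (lower bound plus matching path) and avoids the paper's boundary-case bookkeeping and the unproved induction step; the paper's version buys the stronger intermediate result \eqref{eq:dtw_ann_proof}, a formula for every entry of the warping matrix rather than only the corner value. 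Both are sound; the one point you rightly flag as the crux --- that the two pointwise bounds are simultaneously attained by a single admissible path --- is handled correctly by your explicit construction.
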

\begin{proof}
    First, note that the cross-distance is given by
    \begin{align*}
        d(i,j) &= \mathbb E[X_i^2]+\mathbb E[Y_j^2]\\
        &= \sigma_X^2(1+(i-1)\alpha_X^2) + \sigma_Y^2(1+(j-1)\alpha_Y^2),
    \end{align*}

    due to the independence of $X_i,~Y_j$ for every $i,~j$. Next, we need to recursively compute the warping matrix $G$ as in Eq. \eqref{eq:asym_dtw}. We have $g(1,1)=d(1,1)=\sigma_X^2+\sigma_Y^2$. Due to the cross-distance being increasing in both $i$ and $j$, we can easily solve Eq. \eqref{eq:asym_dtw} by
    \begin{align*}
        g(i,j) &= \min\begin{pmatrix}
            g(i-1,j)\\
            g(i-1,j-1)\\
            g(i-1,j-2)
        \end{pmatrix}
        +d(i,j)\\
        &=g\left(\tilde i,\tilde j\right) + \sum_{k=0}^{i-\tilde i-1}d(i-k,j-2k),
    \end{align*}
    where $\tilde i,~\tilde j$ denotes the indices where the recursion needs more specific computation. We need this distinction because for small $i,~j$ the minimum value is different than expressed in the sum due to some of the values being not assigned. In detail, we have that
    \begin{align*}
        g\left(\tilde i,\tilde j\right)=\begin{cases}
            \sum_{k=0}^{\tilde i - 1}d(\tilde i-k,1) & \text{if }\tilde i>0, \tilde j=1,\\
            g(\tilde i-1,1)+d(\tilde i, 2) & \text{if }\tilde i>1,\tilde j=2,\\
            \text{NA}&\text{if }\tilde i=1,\tilde j>1.\\
        \end{cases}
    \end{align*}
    Altogether, we obtain, by induction,
    \begin{align}\label{eq:dtw_ann_proof}
        g(i,j) = \begin{cases}
            i(\sigma_X^2+\sigma_Y^2)+ {\binom{i}{2}}\sigma_X^2\alpha_X^2+\lfloor \frac{j^2}{4}\rfloor\sigma_Y^2\alpha_Y^2 & \text{if }2i-j\geq 1,\\
            \text{NA} & \text{otherwise.}
        \end{cases}
    \end{align}

    Setting $i=j=n$ finishes the proof.
\end{proof}

\subsubsection{Relation to Wasserstein Distance}

Since we are interested in the one-step ahead forecast, we may look at the corresponding forecast distributions. The forecast distributions are given by the conditional distributions of $\hat X_{n+1|n} = X_{n+1}|l_n^X\sim N(l_n^X,\sigma_X^2)$ and $\hat Y_{n+1|n}=Y_{n+1}|l_n^Y\sim N(l_n^Y,\sigma_Y^2)$, respectively. Now we want to measure the distance between those two distributions. For that we use the $2$-Wasserstein distance, which is defined as follows \citep{Villani2009}. Let $\mu,\nu$ be two measures and $\pi(\mu,\nu)$ be the set of all couplings of $\mu$ and $\nu$. Then
\begin{align*}
    W_2(\mu,\nu) = \sqrt{\inf_{\gamma\in\pi(\mu,\nu)}\int ||x-y||^2d\gamma(x,y)}.
\end{align*}

In the case of random variables we can also write 
\begin{align*}
    W_2(X,Y)=\sqrt{\inf\{\mathbb E[||\tilde X-\tilde Y||^2]: (\tilde X,\tilde Y)\in\pi(X,Y) \}},
\end{align*}
for any two random variables $X,Y$. If both are Gaussian, i.e. $X\sim N(\mu_1,\Sigma_1)$ and $Y\sim N(\mu_2,\Sigma_2)$, then the squared $2$-Wasserstein distance is readily computed by
\begin{align}\label{eq:gaus_wdist}
    W_2^2(X,Y) = ||m_1-m_2||^2 + \text{tr}\left(\Sigma_1+\Sigma_2-2(\Sigma_1^{1/2}\Sigma_2\Sigma_1^{1/2})^{1/2}\right).
\end{align}

Details are available in the work of \citet{gaussian_wasserstein}. Applying Eq. \eqref{eq:gaus_wdist} to the forecast distributions of the $ANN$ models, we can calculate the $2$-Wasserstein distance between $\hat X_{n+1|n}$ and $\hat Y_{n+1|n}$ yielding

\begin{align}\label{eq:w2}
    W_2^2(\hat X_{n+1|n},\hat Y_{n+1|n}) &= (l_n^X-l_n^Y)^2 + (\sigma_X-\sigma_Y)^2
\end{align}

Thus, both Eq. \eqref{eq:dtw_ann}, and \eqref{eq:w2} are quadratic in the model parameters of $ANN$ allowing us to give following theorem.

\begin{theorem}\label{th:wasserstein}
    Let $\mathcal X$ be the space of independent $ANN$ processes of length $n>n(p)$ for some $p\in(0,1)$, equipped with the asymmetric DTW distance, and $\mathcal Y$ be the space of corresponding Gaussian forecast distributions equipped with the squared $2$-Wasserstein distance.\\
    Then the map $\mathcal X\rightarrow\mathcal Y:X\mapsto \hat X_{n+1|n}$ is Lipschitz-continuous with Lipschitz constant $L<1$ and probability at least $p$.\
\end{theorem}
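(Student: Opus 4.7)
I would split $W_2^2$ from Eq.~\eqref{eq:w2} into its deterministic summand $(\sigma_X-\sigma_Y)^2$ and its random summand $(l_n^X-l_n^Y)^2$ and bound each against the closed-form expression for $\text{DTW}_\text{asym}$ from the preceding lemma. The contraction ultimately comes from the asymmetric dependence on $n$: the DTW expression is quadratic in $n$, whereas both summands of $W_2^2$ are at most linear in $n$ (the second one only with high probability).

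For the deterministic part, $(\sigma_X-\sigma_Y)^2\leq\sigma_X^2+\sigma_Y^2$ gives the bound $\tfrac{1}{n}\cdot n(\sigma_X^2+\sigma_Y^2)$, i.e.\ $1/n$ times the first block of \eqref{eq:dtw_ann}. For the random part, the state-space recursion yields $l_n^X=\alpha_X\sum_{t=1}^n\epsilon_t$ and $l_n^Y=\alpha_Y\sum_{t=1}^n\eta_t$, so independence of $X$ and $Y$ implies $l_n^X-l_n^Y\sim N(0,v)$ with $v:=n(\alpha_X^2\sigma_X^2+\alpha_Y^2\sigma_Y^2)$, and hence $(l_n^X-l_n^Y)^2\sim v\cdot\chi_1^2$. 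Writing $z_p$ for the $p$-quantile of $\chi_1^2$, the event $\{(l_n^X-l_n^Y)^2\leq v\,z_p\}$ has probability at least $p$.

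I would then compare $v$ to the quadratic block $B:=\binom{n}{2}\alpha_X^2\sigma_X^2+\lfloor n^2/4\rfloor\alpha_Y^2\sigma_Y^2$ of the DTW. For $n\geq 2$ one checks $\binom{n}{2}\geq\lfloor n^2/4\rfloor\geq(n^2-1)/4$, so coefficient-by-coefficient $v\leq (C/n)\,B$ for an absolute constant $C$ (for instance $C=8$). Combining both bounds, on the probability-$p$ event
\begin{align*}
    W_2^2(\hat X_{n+1|n},\hat Y_{n+1|n})\;\leq\;\frac{\max(1,Cz_p)}{n}\,\text{DTW}_\text{asym}(X,Y),
\end{align*}
and picking $n(p):=\lceil\max(1,Cz_p)\rceil$ yields a Lipschitz constant $L:=\max(1,Cz_p)/n<1$ for every $n>n(p)$.

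The main obstacle is mostly bookkeeping: one has to track the ratios $n/\binom{n}{2}$ and $n/\lfloor n^2/4\rfloor$ carefully and check the boundary case $n=2$ in order to pin down an explicit threshold $n(p)$. The probabilistic content itself is essentially free, since $(l_n^X-l_n^Y)^2$ is exactly a scaled $\chi_1^2$ variable.
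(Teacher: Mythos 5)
Your proposal is correct and follows essentially the same route as the paper's proof: bound the random term $(l_n^X-l_n^Y)^2$ via the $\chi^2(1)$ quantile of its exact scaled distribution, bound $(\sigma_X-\sigma_Y)^2$ by $\sigma_X^2+\sigma_Y^2$, and win because the DTW denominator grows quadratically in $n$ while the numerator grows only linearly. You are merely more explicit about the constants (your threshold $n(p)=\lceil\max(1,Cz_p)\rceil$ with $C=8$ is slightly looser than the paper's implicit $n(p)\approx 4q_p$, cf.\ the remark giving $n>16$ for $p=0.95$), which is a presentational rather than mathematical difference.
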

\begin{proof}
    Let $X,Y\in\mathcal X$ be two arbitrary $ANN$ processes. We have by Eq.~\eqref{eq:w2} that $W^2_2(\hat X_{n+1|n},\hat Y_{n+1|n})=(l_n^X-l_n^Y)^2+(\sigma_X-\sigma_Y)^2$ for fixed states $l_n^X,~l_n^Y$. Since $l_n^X,~l_n^Y$ are both realizations of independent Gaussian random variables, we obtain 
    \begin{align}\label{eq:w2_chi2}
        \mathbb P\left( (l_n^X-l_n^Y)^2\leq q_p n(\alpha_X^2\sigma_X^2+\alpha_Y^2\sigma_Y^2)\right)=p,
    \end{align}
    where $q_p$ is the $p$-quantile of a $\chi^2(1)$ distribution. Then using Eq.~\eqref{eq:w2_chi2} yields
    \begin{align*}
        p=\mathbb P&\left(\cfrac{W^2_2(\hat X_{n+1|n},\hat Y_{n+1|n})}{\text{DTW}_\text{asym}(X,Y)}\leq \cfrac{q_pn(\sigma_X^2\alpha_X^2+\sigma_Y^2\alpha_Y^2)+(\sigma_X-\sigma_Y)^2}{\sigma^2_X\left(n+ {\binom{n}{2}} \alpha_X^2\right)+\sigma^2_X \left(n+\lfloor \frac{n^2}{4}\rfloor\alpha_Y^2\right)}\right) \leq\\
        \mathbb P&\left(\cfrac{W^2_2(\hat X_{n+1|n},\hat Y_{n+1|n})}{\text{DTW}_\text{asym}(X,Y)}\leq \cfrac{q_p(\sigma_X^2(1+n\alpha_X^2)+\sigma_Y^2(1+n\alpha_Y^2))}{\sigma^2_X\left(n+ {\binom{n}{2}} \alpha_X^2\right)+\sigma^2_X \left(n+\lfloor \frac{n^2}{4}\rfloor\alpha_Y^2\right)}\right)\leq\\
        \mathbb P&\left(\cfrac{W^2_2(\hat X_{n+1|n},\hat Y_{n+1|n})}{\text{DTW}_\text{asym}(X,Y)}<1\right)
    \end{align*}
    using that ${\binom{n}{2}},\lfloor n^2/4\rfloor > n$ for $n>n(p)$. Thus, the map $X\mapsto \hat X_{n+1|n}$ is Lipschitz-continuous with constant $L<1$ and probability at least $p$.
\end{proof}
\begin{remark}
    If we want to have Lipschitz-continuity with at least $95\%$ probability, then $q_{0.95}<4$ and Theorem~\ref{th:wasserstein} holds with $n>16$.
\end{remark}
\begin{remark}
    This result also holds when looking at the normalized DTW measure with Lipschitz constant $L>1$ and is therefore not a contraction anymore.
\end{remark}

\begin{remark}
    It also tells us that close time series in terms of DTW are also close in their corresponding forecast distribution both in mean and variance. Further, it assures us that small changes in the time series only affect the difference in forecast distributions by a small amount.
\end{remark}

More detailed results are obtained when considering the mean forecasts given by $\mathbb E[X_{n+1|n}]:=\mathbb E[X_{n+1}|l_n^X]=l_n^X\sim N(0,n\sigma_X^2\alpha_X^2)$, and $\mathbb E[Y_{n+1|n}]=l_n^Y\sim N(0,n\sigma_Y^2\alpha_Y^2)$, respectively. The corresponding $2$-Wasserstein distance is computed to be
\begin{align*}
    W_2^2(\mathbb E [X_{n+1|n}],\mathbb E [Y_{n+1|n}]) &= n\sigma_X^2\alpha_X^2+n\sigma_Y^2\alpha_Y^2-2\sqrt{n^2\sigma_X^2\sigma_Y^2\alpha_X^2\alpha_Y^2} \nonumber \\
    &= n(\sigma_Y\alpha_X-\sigma_Y\alpha_Y)^2.
\end{align*}

\begin{theorem}
    Let $\mathcal X$ be the space of independent $ANN$ processes of length $n>5$ equipped with the asymmetric DTW distance, and $\mathcal Y$ be the space of corresponding Gaussian forecast distributions equipped with the squared $2$-Wasserstein distance.\\
    Then the map $\mathcal X\rightarrow\mathcal Y:X\mapsto \mathbb E[X_{n+1|n}]$ is Lipschitz-continuous with Lipschitz constant $L<1$.
\end{theorem}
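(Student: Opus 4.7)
The plan is to reduce the statement to a direct inequality between the two closed-form expressions already derived in the excerpt, then to extract the Lipschitz constant by a crude but uniform bound on the ratio.

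First I would set up the ratio explicitly. The paper already gives
\[
W_2^2(\mathbb E[X_{n+1|n}],\mathbb E[Y_{n+1|n}]) = n(\sigma_X\alpha_X-\sigma_Y\alpha_Y)^2
\]
(reading the obvious typo as $\sigma_X\alpha_X$ in the first term) and, from the preceding lemma,
\[
\text{DTW}_\text{asym}(X,Y)=\sigma_X^2\!\left(n+\tbinom{n}{2}\alpha_X^2\right)+\sigma_Y^2\!\left(n+\lfloor n^2/4\rfloor\alpha_Y^2\right).
\]
So the goal is to show that the ratio of the first quantity to the second is bounded by some $L<1$ uniformly over all $ANN$ parameters, whenever $n>5$.

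Next I would bound the numerator from above by expanding the square and discarding the negative cross term (using that $\alpha_\cdot,\sigma_\cdot\ge 0$): $(\sigma_X\alpha_X-\sigma_Y\alpha_Y)^2 \le \sigma_X^2\alpha_X^2+\sigma_Y^2\alpha_Y^2$. The numerator is then at most $n\sigma_X^2\alpha_X^2+n\sigma_Y^2\alpha_Y^2$. For the denominator I would simply drop the two non-negative pieces $n\sigma_X^2$ and $n\sigma_Y^2$, obtaining the lower bound $\binom{n}{2}\sigma_X^2\alpha_X^2+\lfloor n^2/4\rfloor\sigma_Y^2\alpha_Y^2$. Combining these gives
\[
\frac{W_2^2}{\text{DTW}_\text{asym}} \;\le\; \frac{n\sigma_X^2\alpha_X^2+n\sigma_Y^2\alpha_Y^2}{\binom{n}{2}\sigma_X^2\alpha_X^2+\lfloor n^2/4\rfloor\sigma_Y^2\alpha_Y^2}.
\]
A weighted-average argument (the ratio $(au+bv)/(cu+dv)$ with $u,v\ge 0$ is bounded by $\max(a/c,b/d)$) then gives the uniform bound $\max\bigl(n/\binom{n}{2},\,n/\lfloor n^2/4\rfloor\bigr)$, which is independent of $\sigma_X,\sigma_Y,\alpha_X,\alpha_Y$.

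Finally I would verify that this bound is strictly less than $1$ for $n>5$. We have $\binom{n}{2}/n=(n-1)/2\ge 5/2$ and $\lfloor n^2/4\rfloor/n\ge 3/2$ already at $n=6$, both increasing in $n$; the smaller of the two reciprocals therefore governs the Lipschitz constant, and we may take $L=2/3$. The main obstacle is really just the one highlighted above: the factor $\lfloor n^2/4\rfloor$, rather than $\binom{n}{2}$, is what forces the lower bound $n>5$ (and not $n>3$), because $\lfloor n^2/4\rfloor$ catches up to $n$ more slowly; this is the step where the constraint in the theorem statement becomes sharp. Everything else is a routine manipulation of the two explicit formulas.
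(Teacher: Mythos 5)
Your proposal is correct and follows essentially the same route as the paper: form the ratio of the closed-form expressions for $W_2^2(\mathbb E[X_{n+1|n}],\mathbb E[Y_{n+1|n}])$ and $\text{DTW}_\text{asym}(X,Y)$, drop the negative cross term in the numerator, and conclude from $\binom{n}{2},\lfloor n^2/4\rfloor>n$ for $n>5$. Your mediant argument extracting the explicit uniform constant $L=2/3$ is in fact slightly more careful than the paper's proof, which only asserts the pointwise bound $<1$ without exhibiting a uniform Lipschitz constant.
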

\begin{proof}
    Let $X,Y\in\mathcal X$ be two arbitrary $ANN$ processes. We have that
    \begin{align*}
        \cfrac{W^2_2(\mathbb E[X_{n+1|n}],\mathbb E[Y_{n+1|n}])}{\text{DTW}_\text{asym}(X,Y)}&=
        \cfrac{n(\sigma_X\alpha_X-\sigma_Y\alpha_Y)^2}{\sigma^2_X\left(n+ {\binom{n}{2}} \alpha_X^2\right)+\sigma^2_X \left(n+\lfloor \frac{n^2}{4}\rfloor\alpha_Y^2\right)}\\
        &< 1,
    \end{align*}
    using that ${\binom{n}{2}},\lfloor n^2/4\rfloor > n$ for $n>5$. Thus, the map $X\mapsto \mathbb E[X_{n+1|n}]$ is Lipschitz-continuous with constant $L<1$.
\end{proof}

\subsubsection{Reduction of Mean Squared Error}

Another result is about the relation of DTW and the mean squared error of a convex combination of the mean forecasts. Let $Z_{n+1|n}(w):=w\mathbb E[X_{n+1|n}]+(1-w)\mathbb E[Y_{n+1|n}]$ for $w\in[0,1]$. We have $Z_{n+1|n}(w)=w l_n^X+(1-w)l_n^Y$. However, in practice, the states $l_n$ are not known and need to be estimated by actually estimating the smoothing parameter $\alpha$. To this end, assume there exists unbiased estimators $\hat\alpha_X,~\hat\alpha_Y$ for $\alpha_X,~\alpha_Y$, respectively. We further assume they have finite second moment. The corresponding estimating forecasts are given by $\hat z(w)$. Then, under certain conditions for the estimation errors made for $\alpha_X,~\alpha_Y$ we have following result.

\begin{theorem}
    Let $X\sim ANN(\alpha_X,\sigma^2_X),~Y\sim ANN(\alpha_Y,\sigma^2_Y)$ be two independent and centered Exponential Smoothing processes of length $n$ and known variances $\sigma_X^2,~\sigma_Y^2$.
    Then a convex combination of the forecasts $\hat Z$ reduces the mean squared error, i.e.
    \begin{align*}
        \mathbb E[(X_{n+1}-\hat z(w))^2]\leq\mathbb E[(X_{n+1}-\hat l_n^X)^2],
    \end{align*}
    if $\text{MSE}(\hat\alpha_Y)\leq \frac{\sigma_X^2}{2\sigma_Y^2}\big((1-w^2)\text{MSE}(\hat\alpha_X)-\text{nDTW}_\text{asym}(X,Y)/\sigma_X^2)$.
\end{theorem}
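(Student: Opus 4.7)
My approach is to peel off the irreducible innovation noise, reduce the claim to a scalar inequality about the unobserved state $l_n^X$, and then identify the resulting moments using independence of $X,Y$, the assumed unbiasedness of the smoothing-parameter estimators, and the earlier DTW computation of Eq.~\eqref{eq:dtw_ann}.

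First I would use $X_{n+1}=l_n^X+\epsilon_{n+1}$ with $\epsilon_{n+1}$ independent of every quantity measurable in the history up to time $n$. Because $\hat l_n^X$, $\hat l_n^Y$ and hence $\hat z(w)$ are all measurable with respect to the first $n$ observations, the cross terms with $\epsilon_{n+1}$ vanish and the additive $\sigma_X^2$ appears on both sides and cancels. The statement reduces to
\begin{equation*}
\mathbb E[(l_n^X-\hat z(w))^2]\leq \mathbb E[(l_n^X-\hat l_n^X)^2].
\end{equation*}
Writing $l_n^X-\hat z(w)=w(l_n^X-\hat l_n^X)+(1-w)(l_n^X-\hat l_n^Y)$ and expanding, I set $A:=\mathbb E[(l_n^X-\hat l_n^X)^2]$, $B:=\mathbb E[(l_n^X-\hat l_n^Y)^2]$, and $C:=\mathbb E[(l_n^X-\hat l_n^X)(l_n^X-\hat l_n^Y)]$. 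The target becomes $(1-w^2)A\geq 2w(1-w)C+(1-w)^2B$, which already makes the $1-w^2$ factor in the stated condition visible.

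Next I would kill the cross term. Independence of the processes $X$ and $Y$ makes $\hat l_n^Y$ independent of anything built from $X$, so $\mathbb E[\hat l_n^Y(l_n^X-\hat l_n^X)]=\mathbb E[\hat l_n^Y]\,\mathbb E[l_n^X-\hat l_n^X]=0$ by zero-mean and unbiasedness. The remaining piece $\mathbb E[l_n^X(l_n^X-\hat l_n^X)]$ vanishes under the conditional identity $\mathbb E[\hat l_n^X\mid X]=l_n^X$, which I would obtain from the linear representation $l_n^X=\alpha_X\sum_{t=1}^n\epsilon_t$ together with the assumption that $\hat\alpha_X$ is unbiased and independent (in this idealized setting) of the innovations $\epsilon_t$. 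The same argument gives $A=n\sigma_X^2\,\mathrm{MSE}(\hat\alpha_X)$ and $\mathbb E[(l_n^Y-\hat l_n^Y)^2]=n\sigma_Y^2\,\mathrm{MSE}(\hat\alpha_Y)$. Splitting $l_n^X-\hat l_n^Y=(l_n^X-l_n^Y)+(l_n^Y-\hat l_n^Y)$ and applying independence once more yields $B=\mathbb E[(l_n^X-l_n^Y)^2]+n\sigma_Y^2\,\mathrm{MSE}(\hat\alpha_Y)$, and $\mathbb E[(l_n^X-l_n^Y)^2]=n(\sigma_X^2\alpha_X^2+\sigma_Y^2\alpha_Y^2)$ can be read off from the computation leading to Eq.~\eqref{eq:dtw_ann} and rewritten in terms of $\mathrm{nDTW}_\mathrm{asym}(X,Y)$.

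Plugging $A$, $B$ and $C=0$ into $(1-w^2)A\geq(1-w)^2 B$, dividing by common factors and isolating $\mathrm{MSE}(\hat\alpha_Y)$ should yield the claimed sufficient condition, with the $\sigma_X^2/(2\sigma_Y^2)$ prefactor emerging from the ratio of the $n\sigma_X^2$ and $n\sigma_Y^2$ normalizations. The main obstacle I anticipate is twofold. First, rigorously justifying $\mathbb E[\hat l_n^X\mid X]=l_n^X$: a genuine MLE or least-squares estimator of $\alpha_X$ is not independent of the innovations it is estimated from, so the theorem is realistically working in an idealized regime where $\hat\alpha$ is treated as an externally supplied unbiased estimator. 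Second, matching the clean expression $n(\sigma_X^2\alpha_X^2+\sigma_Y^2\alpha_Y^2)$ to the precise algebraic form $\mathrm{nDTW}_\mathrm{asym}(X,Y)/\sigma_X^2$, which requires carrying the constant $\sigma_X^2+\sigma_Y^2$ term and the $\binom{n}{2}/n$, $\lfloor n^2/4\rfloor/n$ coefficients of the DTW formula through the rearrangement without sign or normalization errors.
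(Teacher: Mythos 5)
Your overall strategy is the same as the paper's: peel off $\epsilon_{n+1}$, reduce to the state-estimation error, and identify $\mathbb E[(l_n^X-\hat l_n^X)^2]=n\sigma_X^2\,\text{MSE}(\hat\alpha_X)$, $\mathbb E[(l_n^Y-\hat l_n^Y)^2]=n\sigma_Y^2\,\text{MSE}(\hat\alpha_Y)$ and $\mathbb E[(l_n^X-l_n^Y)^2]=n(\sigma_X^2\alpha_X^2+\sigma_Y^2\alpha_Y^2)$ under the idealization that $\hat\alpha$ is independent of the innovations — an idealization the paper also uses implicitly when it conditions on $\hat\alpha_Y=a$ and pulls out $n\sigma_Y^2(a-\alpha_Y)^2$, so you are right to flag it but it is not a point of divergence. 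Where you genuinely differ is the decomposition: you expand $l_n^X-\hat z(w)=w(l_n^X-\hat l_n^X)+(1-w)(l_n^X-\hat l_n^Y)$ exactly and show the cross term $C$ vanishes, whereas the paper uses the lossier split $(l_n^X-l_n^Y)+(l_n^Y-\hat l_n^Y)+w(\hat l_n^Y-\hat l_n^X)$, discards the (nonpositive) cross terms, and then further bounds $(1+w^2)\text{MSE}(\hat\alpha_Y)\leq 2\,\text{MSE}(\hat\alpha_Y)$ and $(1+w^2)n(\sigma_X^2\alpha_X^2+\sigma_Y^2\alpha_Y^2)\leq\text{DTW}_\text{asym}(X,Y)$. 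Your route is cleaner and, if carried through, proves a sharper statement.

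The gap is in your final step. With $C=0$, your exact inequality $(1-w^2)A\geq(1-w)^2B$ reads, after cancelling $n$ and a factor $(1-w)$, $(1+w)\sigma_X^2\,\text{MSE}(\hat\alpha_X)\geq(1-w)\bigl(\sigma_X^2\alpha_X^2+\sigma_Y^2\alpha_Y^2+\sigma_Y^2\,\text{MSE}(\hat\alpha_Y)\bigr)$; no amount of dividing by common factors turns this into the stated condition, because the prefactor $\sigma_X^2/(2\sigma_Y^2)$ and the term $\text{nDTW}_\text{asym}(X,Y)/\sigma_X^2$ simply do not arise — $\text{nDTW}_\text{asym}$ carries the extra summand $\sigma_X^2+\sigma_Y^2$ and the coefficients $\binom{n}{2}/n$ and $\lfloor n^2/4\rfloor/n$ rather than $1$. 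What you actually have to show is the converse implication: the \emph{stated} condition implies your exact inequality. That does work — use $(1-w)^2\leq 1\leq 2$ on the $\text{MSE}(\hat\alpha_Y)$ term and $(1-w)^2(\sigma_X^2\alpha_X^2+\sigma_Y^2\alpha_Y^2)\leq\text{nDTW}_\text{asym}(X,Y)$, which holds once $(n-1)/2\geq1$ and $\lfloor n^2/4\rfloor\geq n$, i.e.\ $n\geq4$ — but this deliberate loosening must be stated explicitly; as written, you have derived a different (strictly weaker, i.e.\ less restrictive) sufficient condition rather than the one in the theorem.
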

\begin{proof}
    We have that
    \begin{align*}
        \mathbb E[(X_{n+1}-\hat z(w))^2] &= \mathbb E[(l_n^X+\epsilon_{n+1}-(w\hat l_n^X+(1-w)\hat l_n^Y))^2] \\
        &= \sigma_X^2 + \mathbb E[(l_n^X-(w\hat l_n^X+(1-w)\hat l_n^Y))^2],
    \end{align*}
    using the independence of the error term $\epsilon_{n+1}$. Further, we obtain
    \begin{align*}
        \mathbb E[(X_{n+1}-\hat z(w))^2] &\leq \mathbb E[(l_n^X-l_n^Y)^2] + w^2\mathbb E[(\hat l_n^X-\hat l_n^Y)^2] + \mathbb E[(l_n^Y-\hat l_n^Y)^2] \\
        &= n(\sigma_X^2\alpha_X^2+\sigma_Y^2\alpha_Y^2)+w^2(\mathbb E[(\hat l_n^X)^2]+\mathbb E[(\hat l_n^Y)^2])+ \mathbb E[(l_n^Y-\hat l_n^Y)^2].
    \end{align*}
    We can quickly compute the last terms of above by
    \begin{align*}
        \mathbb E[(l_n^Y-\hat l_n^Y)^2] &= \int\mathbb E[ (l_n^Y-\hat l_n^Y)^2|\hat\alpha_Y=a]\mathbb P(\hat\alpha_Y=a)da\\
        &=n\sigma_Y^2 \int (a-\alpha_Y)^2 \mathbb P(\hat\alpha_Y=a)da \\
        &=n\sigma_Y^2 \text{MSE}(\hat\alpha_Y),\quad\text{and}\\
        \mathbb E[(\hat l_n^X)^2] &= n\sigma_X^2\mathbb E[\hat\alpha_X^2]\\
        &= n\sigma_X^2(\text{MSE}(\hat\alpha_X)+\alpha_X^2).
    \end{align*}
    In total we get that
    \begin{align*}
        \mathbb E[(X_{n+1}-\hat z(w))^2] &\leq n\sigma_X^2\big( (1+w^2)\alpha_X^2+w^2\text{MSE}(\hat\alpha_X)\big)+\\
        &\quad n\sigma_Y^2\big((1+w^2)\alpha_Y^2+2\text{MSE}(\hat\alpha_Y)\big)\\
        &\overset{!}{\leq} \mathbb E[(X_{n+1}-\hat l_n^X)^2]\\
        &= n\sigma_X^2\text{MSE}(\hat\alpha_X).
    \end{align*}
    Using that $n\sigma_X^2(1+w^2)\alpha_X^2+n\sigma_Y^2(1+w^2)\alpha_Y^2\leq \text{DTW}_\text{asym}(X,Y)$, this finally yields
    \begin{align*}
        \mathbb E[(X_{n+1}-\hat z(w))^2] &\leq \text{DTW}_\text{asym}(X,Y)+n(w^2\sigma_X^2\text{MSE}(\hat\alpha_X)+2\sigma_Y^2\text{MSE}(\hat\alpha_Y))\\
        &\leq n\sigma_X^2\text{MSE}(\hat\alpha_X),
    \end{align*}
    if $\text{MSE}(\hat\alpha_Y)\leq \frac{\sigma_X^2}{2\sigma_Y^2}\big((1-w^2)\text{MSE}(\hat\alpha_X)-\text{nDTW}_\text{asym}(X,Y)/\sigma_X^2)$.
\end{proof}

\begin{remark}
    In practice, the previous theorems tell us that if $X,Y$ are close in terms of DTW and, additionally, the estimation error made when estimating $\alpha_Y$ is smaller than the error made for $\alpha_X$, then the convex combination forecast improves the point forecast for $X_{n+1}$. 

    The condition of $\text{MSE}(\hat\alpha_Y)<<\text{MSE}(\hat\alpha_X)$ might have various reasons. In an application, the fit of the model might be better for $Y$ than for $X$, resulting in better estimation of $\alpha$.
\end{remark}

\subsubsection{Conclusions}

These theoretical results give us arguments of our methodology for the most simple cases of models. However, the arguments might be extended to a broader family of models as given in \citet{expsmoothing_book}. In practice, we also need to use \textit{open begin, open end} matching since the asymmetric DTW measure cannot be computed once the reference time series is too long. Still similar arguments should hold, and motivate our approach to using DTW neighborhoods and perform model averaging. The theory does not give us hints how to choose the optimal weights, hence we propose the weights of Section~\ref{sec:mavg}.

\section{Further TSCV Evaluation Plots}
\begin{figure}[!ht]
    \center
    \includegraphics[width=0.75\textwidth]{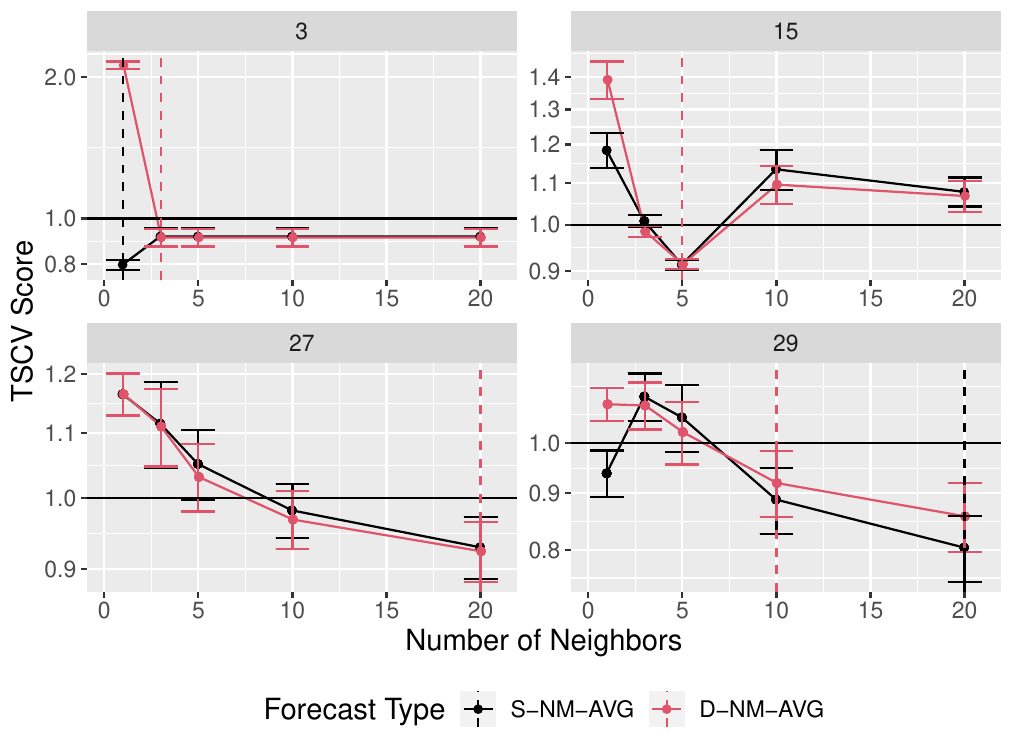}
    \caption{TSCV scores for simple no-model averaging. The dashed vertical lines indicate the optimal number of neighbors.}
\end{figure}
\begin{figure}[!ht]
    \center  
    \includegraphics[width=0.75\textwidth]{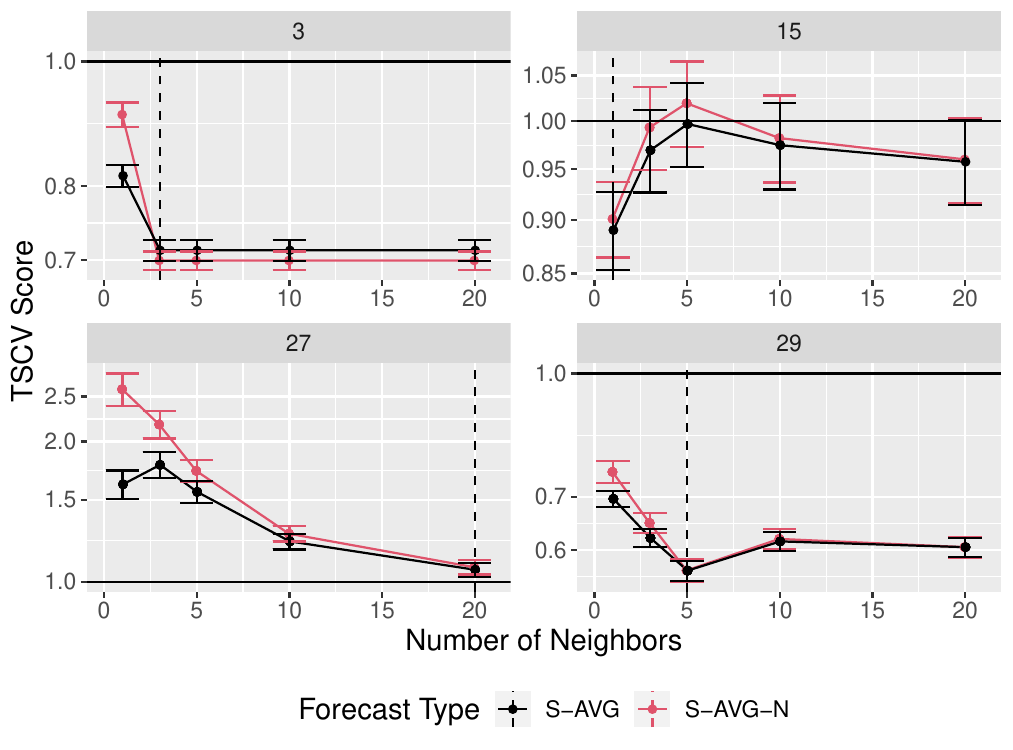}
    \caption{TSCV scores for simple averaging. The dashed vertical lines indicate the optimal number of neighbors.}
\end{figure}
\begin{figure}[!ht]
    \center
    \includegraphics[width=0.75\textwidth]{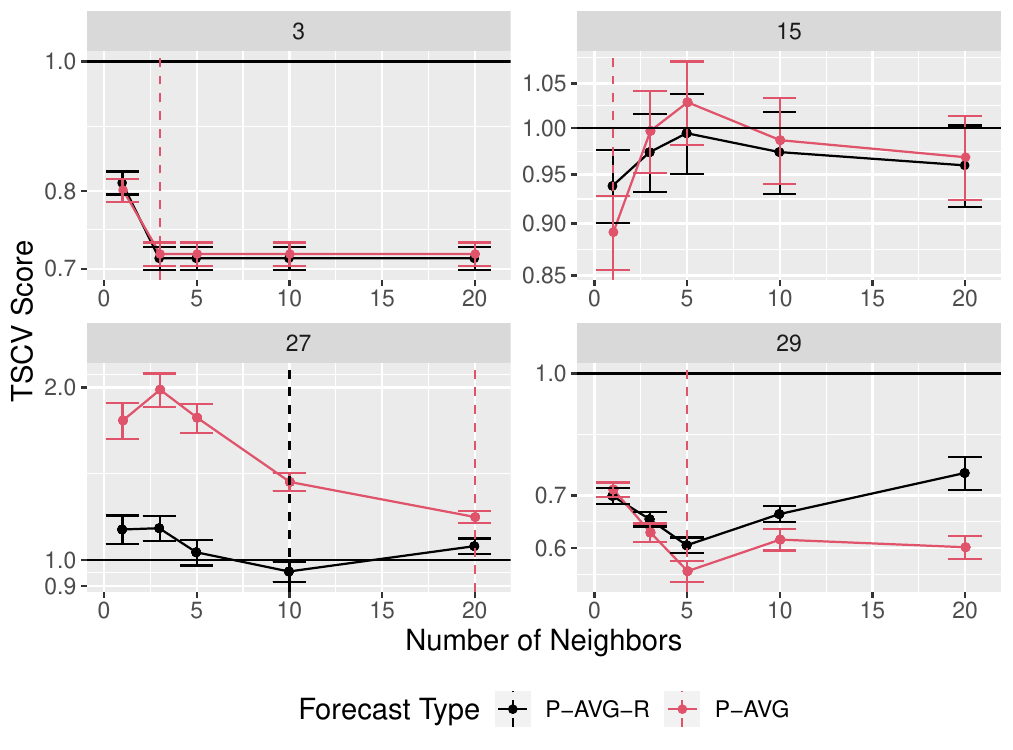}
    \caption{TSCV scores for performance-based averaging. The dashed vertical lines indicate the optimal number of neighbors.}
\end{figure}
\clearpage

\section{Further Evaluation Plots}

\begin{figure}[!ht]
    \center
    \includegraphics[width=0.8\textwidth]{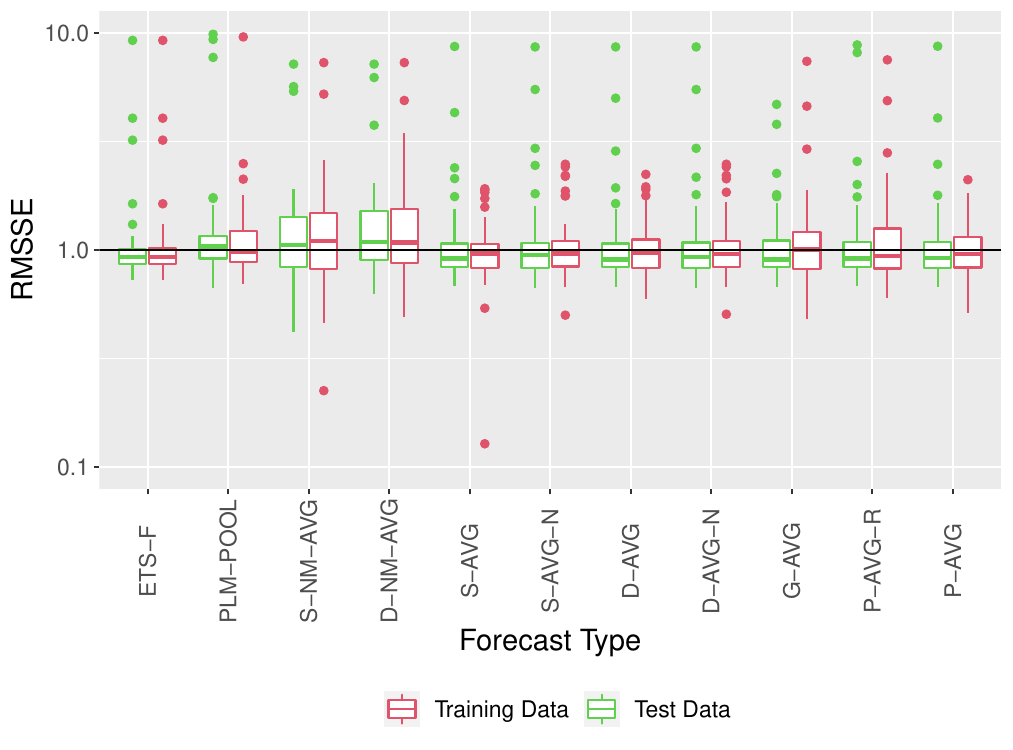}
    \caption{Global RMSSE with respect to random-walk forecasts.}
    \label{fig:global_rmsse}
\end{figure}

\begin{figure}[!ht]
    \center
    \includegraphics[width=0.8\textwidth]{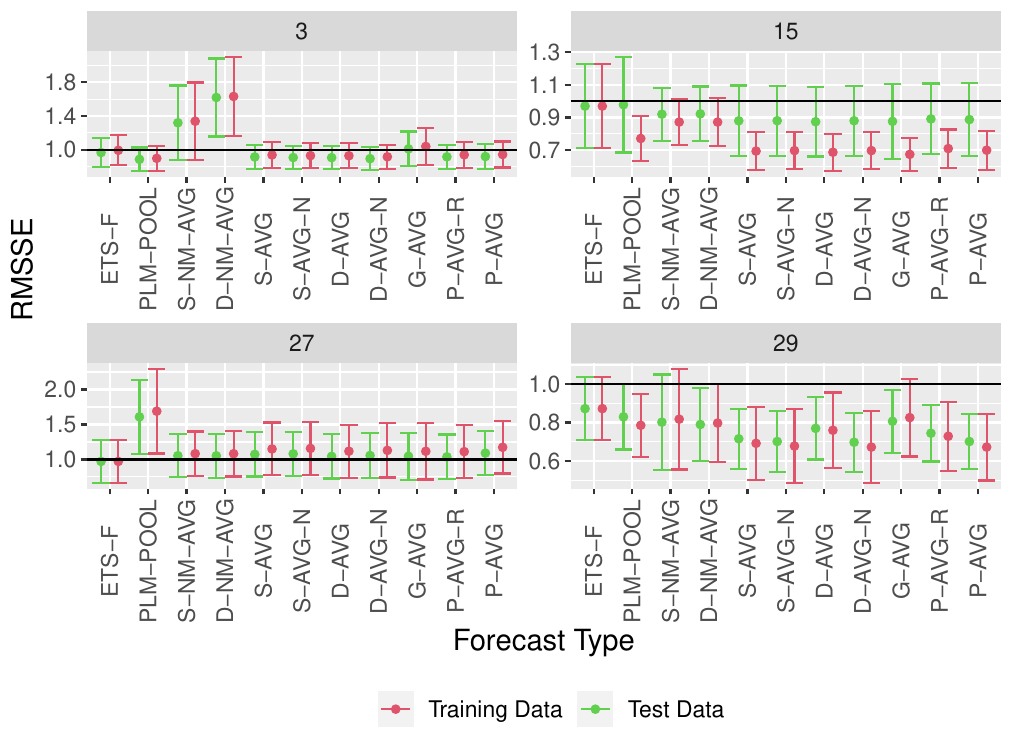}
    \caption{Individual RMSSE with respect to random-walk forecasts.}
    \label{fig:rmsse}
\end{figure}

\begin{figure}[!ht]
    \center
    \includegraphics[width=0.8\textwidth]{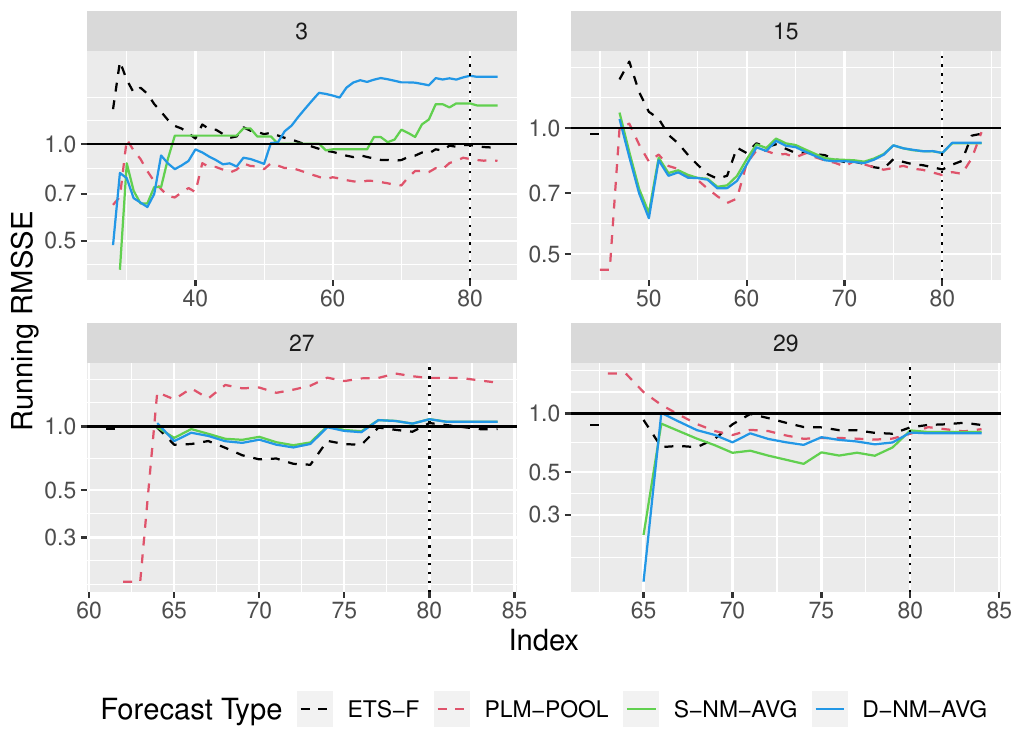}
    \caption{Running RMSSE with respect to random-walk forecasts for simple no-model averaging.}
\end{figure}

\begin{figure}[!ht]
    \center
    \includegraphics[width=0.8\textwidth]{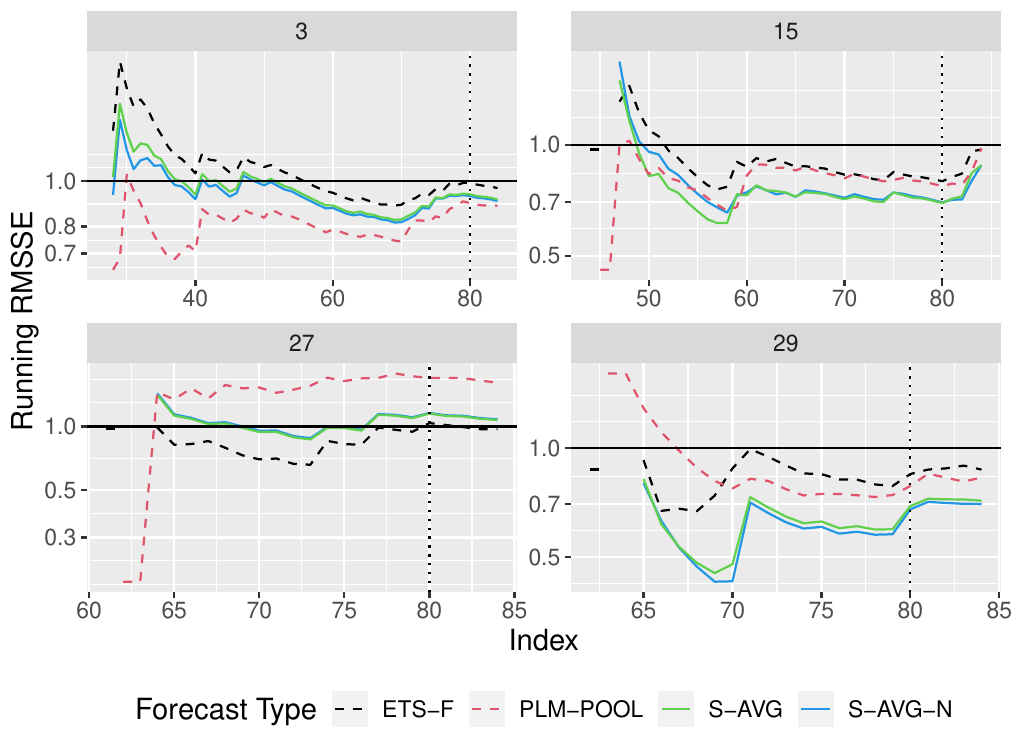}
    \caption{Running RMSSE with respect to random-walk forecasts for simple averaging.}
\end{figure}

\begin{figure}[!ht]
    \center
    \includegraphics[width=0.8\textwidth]{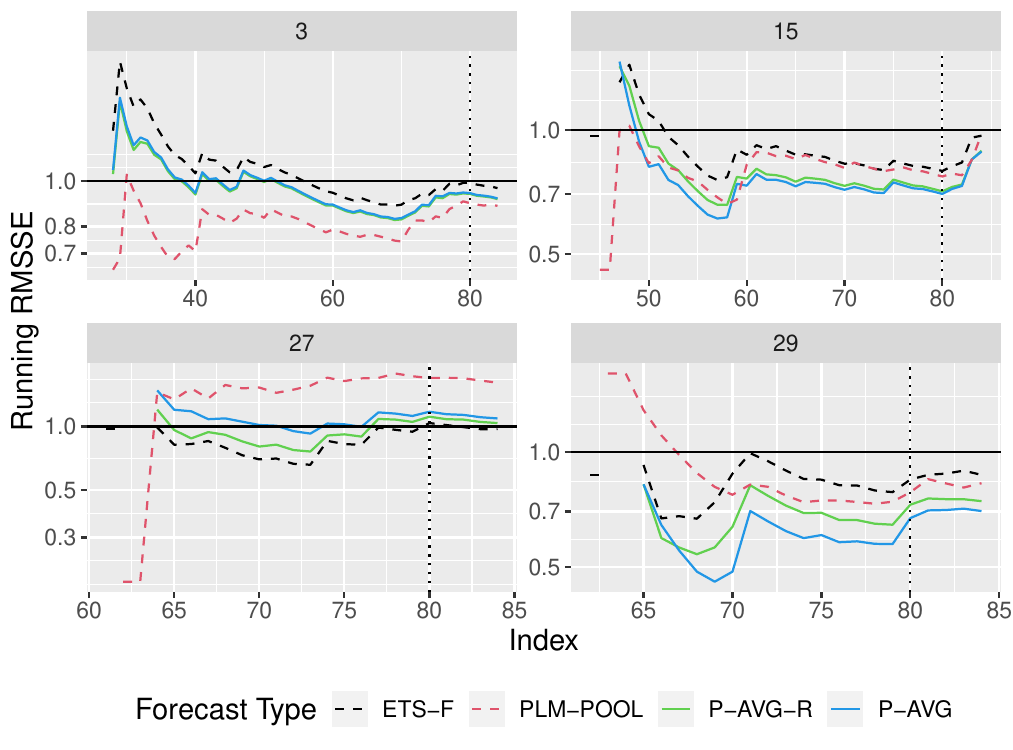}
    \caption{Running RMSSE with respect to random-walk forecasts for performance-based averaging.}
\end{figure}

\clearpage

\bibliographystyle{plainnat}
\bibliography{refs.bib}

\end{document}